
\documentclass[twocolumn]{svjour3}          
\smartqed  
\usepackage{graphicx}
\usepackage{scrtime}
\usepackage[dont-mess-around]{fnpct}

\usepackage[misc,geometry]{ifsym}
\usepackage{times,amsfonts,graphicx}
\usepackage[perpage,symbol]{footmisc}
\usepackage[english]{babel}
\usepackage{commath}
\usepackage{pbox, booktabs, rotating}
\usepackage{array, makecell, lipsum}
\usepackage{threeparttable}
\usepackage{multirow,hyphenat,balance}
\usepackage{placeins}
\usepackage[compress]{cite}
\usepackage{float,color}
\usepackage[linesnumbered,ruled]{algorithm2e}
\usepackage{gensymb}
\usepackage{rotating, color} 
\usepackage{listings}
\usepackage{booktabs}
\usepackage[dvipsnames]{xcolor}
\usepackage{bm}
\usepackage{boldline}
\usepackage{xcolor}
\usepackage{balance, flushend}
\usepackage{amssymb}
\DeclareMathAlphabet{\mathcal}{OMS}{cmsy}{m}{n}


\usepackage{caption}
\usepackage[perpage,symbol]{footmisc}
\usepackage[english]{babel}
\usepackage{multirow,multicol}
\usepackage{amsmath, bm} 
\usepackage{pbox, booktabs, rotating}
\usepackage{subfigure}
\usepackage{array, makecell, lipsum}
\usepackage{enumerate}
\usepackage{threeparttable}
\usepackage{multirow,hyphenat,balance}
\usepackage{placeins}
\usepackage{float,color}
\usepackage{ragged2e}
\usepackage{enumerate}
\usepackage{etoolbox}
\usepackage{blkarray}
\usepackage[linesnumbered,ruled]{algorithm2e}
\let\oldnl\nl
\newcommand{\nonl}{\renewcommand{\nl}{\let\nl\oldnl}}

\usepackage{paralist}
\usepackage{cases}
\usepackage{url}
\usepackage{framed}
\usepackage{flushend}
\usepackage{esvect}
\usepackage[english]{babel}
\usepackage{kbordermatrix}
\usepackage{tikz} 
\usepackage{amssymb}  



\newcolumntype{M}[1]{>{\centering\arraybackslash}m{#1}}


\AtBeginDocument{%
  \providecommand\BibTeX{{%
    \normalfont B\kern-0.5em{\scshape i\kern-0.25em b}\kern-0.8em\TeX}}}

\newcommand*{\affaddr}[1]{#1} 
\newcommand*{\affmark}[1][*]{\textsuperscript{#1}}
\newcommand*{\ie}{{\em i.e.}\@\xspace}
\newcommand*{\eg}{{\em e.g.}\@\xspace}
\newcommand*{\el}{{\em et al.}\@\xspace}
\makeatletter
\newcommand*{\etc}{%
    \@ifnextchar{.}%
        {etc}%
        {etc.\@\xspace}%
}

 \usepackage{mathptmx}      
%
\usepackage{latexsym}
%
%
 \journalname{J Electron Test}

\begin{document}

\title{AFIA: ATPG-Guided Fault Injection Attack on Secure Logic Locking}

\author{%
Yadi Zhong\affmark[1] \and Ayush Jain\protect\affmark[1] \and M. Tanjidur Rahman\affmark[2] \and Navid Asadizanjani\affmark[2] \and Jiafeng Xie\affmark[3] \and Ujjwal Guin\affmark[1]
      }
\institute{ Yadi Zhong \at
             \email{yadi@auburn.edu}
           \and
            Ayush Jain \at
              \email{ayush.jain@auburn.edu}
            \and  
            M. Tanjidur Rahman \at
             \email{mir.rahman@ufl.edu}
           \and
              Navid Asadizanjani \at
            \email{nasadi@ufl.edu}
            \and 
            Jiafeng Xie \at
              \email{jiafeng.xie@villanova.edu}
            \and
            \Letter \: Ujjwal Guin \at
              \email{ujjwal.guin@auburn.edu }           
              \\ \\
             \affaddr{\affmark[1]Department of Electrical and Computer Engineering, Auburn University, Auburn, AL, 36849, USA} \\
             \affaddr{\affmark[2]Department of Electrical and Computer Engineering, University of Florida, Gainesville, FL, 32611, USA}\\
            \affaddr{\affmark[3]Electrical and Computer Engineering, Villanova University, Villanova, PA, 19085, USA}
}

\date{Received: date / Accepted: date}

\maketitle

\begin{abstract}
The outsourcing of the design and manufacturing of integrated circuits has raised severe concerns about the piracy of Intellectual Properties and illegal overproduction. Logic locking has emerged as an obfuscation technique to protect outsourced chip designs, where the circuit netlist is locked and can only be functional once a secure key is programmed. However, Boolean Satisfiability-based attacks have shown to break logic locking, simultaneously motivating researchers to develop more secure countermeasures. In this paper, we present a novel fault injection-based attack to break any locking technique that relies on a stored secret key, and denote this attack as \textit{\textbf{AFIA}}, \textit{\textbf{A}}TPG-guided \textit{\textbf{F}}ault \textit{\textbf{I}}njection \textit{\textbf{A}}ttack. The proposed attack is based on sensitizing a key bit to the primary output while injecting faults at a few other key lines that block the propagation of the targeted key bit. AFIA is very effective in determining a key bit as there exists a stuck-at fault pattern that detects a stuck-at 1 (or stuck-at 0) fault at any key line. The average complexity of the number of injected faults for AFIA is linear with the key size $\mathcal{K}$ and requires only $\mathcal{K}$ test patterns to determine a secret key $K$. AFIA requires fewer injected faults to sensitize a bit to the primary output, compared to $2\mathcal{K}-1$ faults for the differential fault analysis attack~\cite{jain2020atpg}.
\end{abstract}

\keywords{Logic locking, differential fault analysis, fault injection, IP Piracy, IC overproduction}

\maketitle

\section{Introduction} \label{sec:intro}
Over the last few decades, the impact of globalization has transformed the integrated circuit (IC) manufacturing and testing industry from vertical to horizontal integration. The continuous trend of device scaling has enabled the designer to incorporate more functionality in a system-on-chip~(SoC) by adopting lower technology nodes to increase performance and reduce the overall area and cost of a system. Currently, most SoC design companies or design houses no longer manufacture chips and maintain a foundry (fab) of their own. This is largely due to the increased complexity in the fabrication process as new technology development is being adopted. The cost for building and maintaining such foundries is estimated to be a multi-million dollar investment~\cite{YehFabCost2012}. As modern integrated circuits (ICs) are becoming more complex, parts of the design are reused instead of designing the whole from scratch. As a result, the design house integrates intellectual properties~(IP) obtained from different third-party IP vendors and outsources the manufacturing to an offshore foundry. Due to this distributed design and manufacturing flow, which includes designing SoCs using third-party IPs, manufacturing, testing, and distribution of chips, various threats have emerged in recent years~\cite{alkabani2007active, castillo2007ipp, tehranipoor2011introduction}. The research community has also been extensively involved in proposing countermeasures against these threats~\cite{roy2008epic, rajendran2012security, guin2016fortis, charbon1998hierarchical, kahng2001constraint, qu2007intellectual, jarvis2007split}.

\begin{figure}[t]
    \centering
    \includegraphics[width=\linewidth]{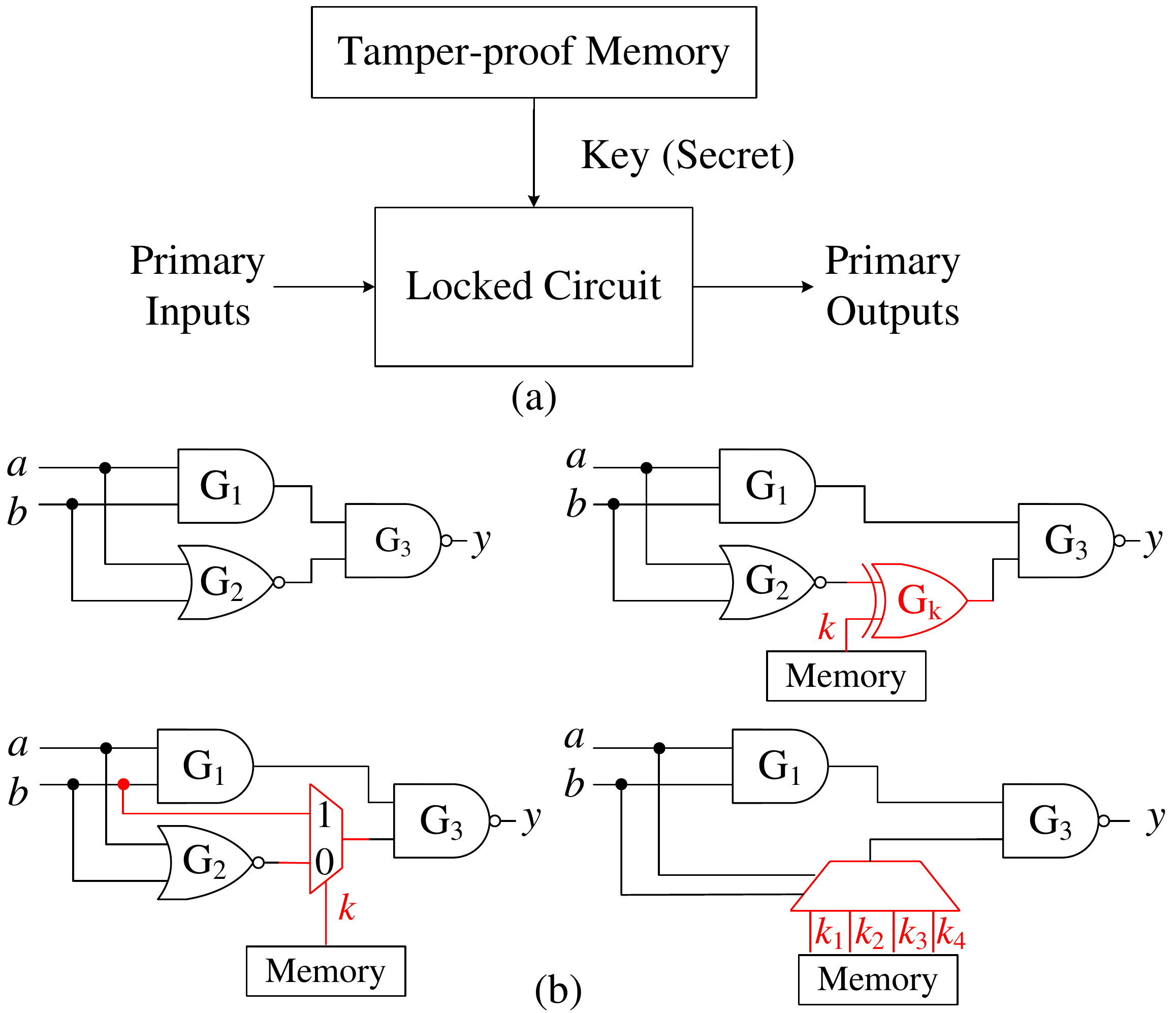} \vspace{-5px}
    \caption{Logic Locking: (a) An abstract view of the logic locking. (b) Different types of logic locking techniques with XOR/XNOR, MUX and LUT.} \vspace{-15px}
    \label{fig:LL}
\end{figure}

Logic locking has emerged as the most prominent method to address the threats from untrusted manufacturing~\cite{lee2006low, alkabani2007active, roy2008epic, chakraborty2008hardware}. In logic locking, the netlist of a circuit is locked with a secret key so that the circuit produces incorrect results in regular operation unless the same key is programmed into the chip. Figure~\ref{fig:LL}(a) shows an abstract view of logic locking where the key is stored in a tamper-proof memory and is applied to the locked circuit to unlock its functionality. The key needs to be kept secret, and care must be taken during the design process so that this secret key is not leaked to the primary output directly during the operation. The common logic locking techniques insert additional logic elements like XOR gates~\cite{roy2008epic}, multiplexers (MUXs)~\cite{rajendran2015fault}, and look-up tables (LUTs)~\cite{baumgarten2010preventing} to lock the circuit functionality, and are shown in Figure \ref{fig:LL}(b). SAT attack, by Subramanyan~\el~\cite{subramanyan2015evaluating}, was among the first ones to efficiently attack a range of locking schemes. With SAT analysis~\el~\cite{subramanyan2015evaluating}, the key of a locked circuit is determined in a short period of time. The SAT attack requires the locked netlist, recovered through reverse engineering, and a functional working chip. Since then, several SAT-resistant locking techniques have emerged~\cite{xie2016mitigating, wang2017secure, GuinTVLSI2018, yasin2017provably, sengupta2018customized, sweeney2020latch, azar2021data, rahman2022oclock, kamali2019full, roshanisefat2018srclock,kamali2018lut, shamsi2019icysat, xie2017delay} and many of them were broken soon after they have been proposed~\cite{shamsi2019ip, rahman2021security, duvalsaint2019characterization, duvalsaint2019characterization_seq, kamali2020interlock, limaye2019robust, el2017reverse, kamali2022advances, zhong2022complexity}. The majority of the research has been directed towards SAT attack resiliency. \textit{However, can we reliably state that a logic locking technique is completely secure even if we achieve complete SAT resistivity?} An untrusted foundry can be treated as an adversary as logic locking is proposed to protect designs from untrusted manufacturing. The adversary has many more effective means to determine the secret key without performing SAT analysis. A few of these attacks can be in the form of probing~\cite{rahman2020key, rahman2020defense}, inserting a hardware Trojan in the design~\cite{jain2021taal}, and analyzing the circuit topology~\cite{zhang2019tga, zhang2020novel, sirone2020functional}. Countermeasures are also developed to partially prevent these attacks~\cite{zhang2015veritrust, shen2018nanopyramid, vashistha2018trojan, zhang2019tga, zhang2020novel,sisejkovic2021deceptive, kamali2020interlock, alrahis2021unsail}.


Unlike cryptosystems, not all input patterns for a locked circuit are valid for propagating the incorrect key values to the primary outputs. Instead, only a few patterns may exist to carry the values of key bits to the output, similar to the identification of hard-to-detect faults. This is especially true for Post-SAT solutions~\cite{xie2019anti, yasin2017provably,yasin2016sarlock,sengupta2020truly,shakya2020cas}, where they minimize the output corruptibility for incorrect keys. For logic locking, some key bits can block the propagation of the target key bit, \ie SLL~\cite{rajendran2012security} and Post-SAT designs. This is different from the fault injection attack in cryptography, where an entirely new output can be observed under any input pattern even though there is a single bit change in the key as the plaintext goes through many transformations (e.g., Shift Rows, Mix Columns and key addition for AES)~\cite{AES2001, paar2009understanding, blomer2003fault}. It is trivial for a cryptosystem to change one key bit and apply a random pattern. Unfortunately, this is not the case for a circuit locked with a secret key. It is hard to observe the output change with the change of a single key bit by applying a random pattern. The novelty of this paper is that we apply the methodology in ATPG to efficiently derive the desired input pattern, which guarantees the change in output under different keys and helps launch the fault injection attack.

This paper shows how an adversary can extract the secret key from a locked netlist, even if all the existing countermeasures are in place. An adversary can determine the secret key by injecting faults at the key registers~\cite{rahman2020defense, rahman2020key}, which hold the key value during normal operation, and performing differential fault analysis. 
In this paper, we present \textit{\textbf{AFIA}}, key sensitization-based \textit{\textbf{A}}TPG-guided \textit{\textbf{F}}ault \textit{\textbf{I}}njection \textit{\textbf{A}}ttack, to break any locking scheme. The entire process can be performed in three steps. First, we process the locked netlist and converted it into a directed graph to extract all logic cones and construct a key-cone association matrix that records the distribution of keys among different cones. This structural analysis facilitates total fault reduction for subsequent test pattern generation. Second, it is necessary to select an input pattern that produces an incorrect response for the target key bit only while keeping its dependent keys at faulty states. This can be achieved by using a constrained automatic test pattern generation~(ATPG)~\cite{bushnell2004essentials} to generate such a test pattern, which is widely popular for testing VLSI circuits. It is a simple yet effective way to determine a 1-bit key by generating a test pattern that can detect the stuck-at fault (\textit{saf}) at the target key (corresponds to that key bit) while keeping its dependent keys at logic 1 (or logic 0). Dependencies are often inserted~\cite{yasin2015improving} to prevent direct sensitization of a key bit to the output by test patterns due to other key lines blocking its path. In our proposed approach, the pattern which detects a stuck-at 1 (\textit{sa1}) fault at one key line with logical constraints for the recovered key lines is sufficient to determine that key bit. One can also use stuck-at 0 (\textit{sa0}) fault to derive such pattern. 
Note that the fault-free and faulty responses are always the complements under the test pattern that detects that fault, which helps to derive the key bit value. The same process needs to be applied for other key bits to generate such input patterns, and this results in most {$\mathcal{K}$} patterns for determining the entire key of size {$\mathcal{K}$}. Note that one test pattern can detect multiple key bits when they are placed in different logic cones (no dependencies). Third, we apply these test patterns to only one instance of unlocked chip obtained from the market and collect the responses. Faults can be injected at the blocking key registers using laser fault injection equipment (see Section~\ref{subsec:laser-fault-injection} for details) and obtain the key value by comparing the output responses with test patterns' generated by constrained ATPG. This is a significant improvement compared to our previous conference paper~\cite{jain2020atpg} where differential fault analysis requires injection of faults twice.

The contributions of this paper are described as follows:

\begin{itemize}
    \item We propose a novel attack to break secure logic locking techniques using fault injection-based method. The basic idea behind the attack is the availability of an input pattern that sensitizes a key bit to the primary output. If there are interdependencies among keys, fault injection is necessary only for the dependent key bits in order to propagate the desired ones to the output. Multiple key bits can be sensitized to the outputs if they are placed in different logic cones during locking. \textit{\textbf{To the best of our knowledge, we are the first to demonstrate that the stuck-at fault patterns can be used to determine the secret key of a locked circuit with fault injections on interdependent keys.}} 
    
    \item The proposed attack can be launched very efficiently with the minimum number of injected faults. It is necessary to inject faults only to ensure the proper key propagation, whereas our prior work~\cite{jain2020atpg} requires $2\mathcal{K}-1$ faults ($\mathcal{K}-1$ faults for $C_A$ and $\mathcal{K}$ faults for $C_F$) to determine one key bit. In addition, our proposed cone analysis approach can find key bits which are located in different cones in parallel. As fault injection is an expensive process, we propose to generate test patterns that reduce the number of injected faults. Each key bit is targeted one at a time to minimize the number of faults. Note that fault injection is necessary when a group of key bits blocks the propagation of a targeted key bit to the output. 
    \item We demonstrate the feasibility of our proposed fault injection attack using Hamamatsu PHEMOS-1000, a laser fault injection equipment, on a Kintex-7 FPGA~\cite{xilinx}. We have performed extensive simulations on different benchmarks with secure locking techniques. Constrained ATPG using the Synopsys TetraMAX tool~\cite{SynopsysTetraMAX} is used to generate test patterns to simulate the attack. The simulation result shows a significant reduction of total fault count for AFIA compared to DFA~\cite{jain2020atpg} in breaking the same locked benchmark.
    
\end{itemize}

The rest of the paper is organized as follows: An overview of different logic locking techniques and existing attacks along with fault injection techniques is provided in Section~\ref{sec:prior-work}. We describe our previously published attack \cite{jain2020atpg} in Section \ref{sec:background}. The proposed attack and its methodology to extract the secret key from any locked circuit are described in Section~\ref{sec:AFIA}. We present the results for the implementation of the proposed attack on different locked benchmark circuits in section~\ref{sec:experimental-results}. Finally, we conclude our paper in Section~\ref{sec:conclusion}.

\section{Prior Work}\label{sec:prior-work} 
The prior work related to logic locking and fault injection techniques is described in this section.  

\vspace{-5px}
\subsection{Logic Locking}\label{logic-locking}
As mentioned in Section~\ref{sec:intro}, the objective of logic locking is to obfuscate the functionality of the original circuit by inserting a lock (secret key). The key-dependent circuit makes it difficult for the adversary to pirate or analyze the original circuit directly. In this context, various traditional logic locking techniques were based on different location selection algorithms for key gate placement, such as random~(RLL)~\cite{roy2010ending}, fault analysis-based~(FLL)~\cite{rajendran2012security}, and strong interference-based logic locking~(SLL)~\cite{rajendran2015fault}. To demonstrate the capabilities of an adversary, Subramanyan~\el~\cite{subramanyan2015evaluating} developed a technique using Boolean Satisfiability (SAT) analysis to obtain the secret key from a locked chip. This oracle-guided SAT attack iteratively rules out incorrect key values from the key space by using distinguishing input patterns and the corresponding oracle responses.

\begin{figure*}[t]
    \centering 
    \includegraphics[width=\linewidth]{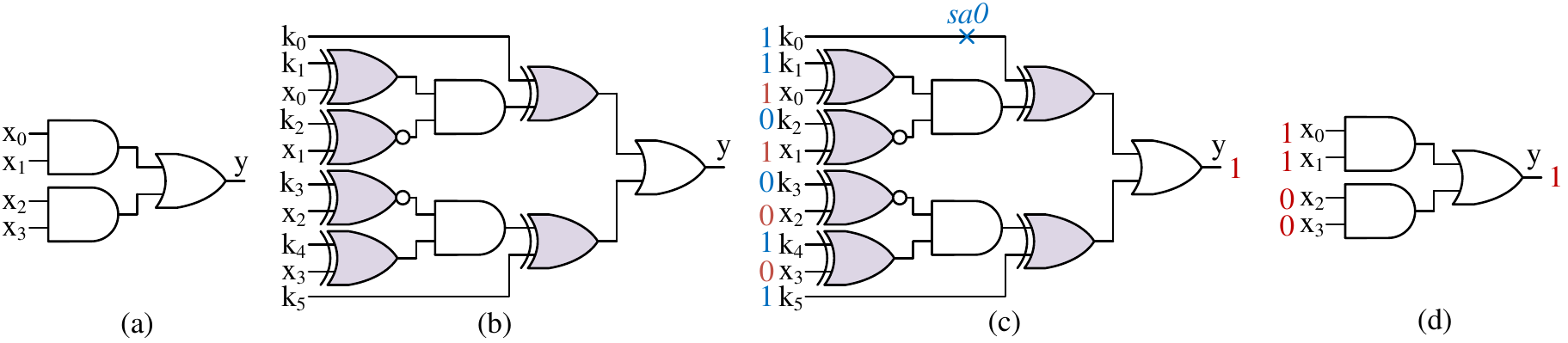}\vspace{-10px}
    \caption{The inefficiency of CLIC-A attack. (a) Original circuit. (b) Locked circuit with 6 dependent key gates, where correct key $\{k_0,...,k_5\}=\{\textit{001100}\}$. (c) Key assignment and input pattern returned by Constraint-based CLIC-A. (d) The oracle response $y=1$.} \label{fig:clic-a} 
\end{figure*}

In post-SAT era, resiliency against the SAT attack became one of the crucial metrics to demonstrate the effectiveness of newly proposed schemes~\cite{kamali2022advances}. Sengupta~\el proposed stuck-at-fault based stripping of original netlist and reconstruction to form the locked netlist, where incorrect results are produced only for chosen input patterns~\cite{sengupta2018customized, yasin2017provably}. Simultaneously, researchers have adopted a different direction to tackle the SAT attack, including restricting access to the internal states of a circuit through scan-chains. Guin \el proposed a design that prevents scanning out the internal states of a design after a chip is activated and the keys are programmed/stored in the circuit~\cite{guin2017novel, GuinTVLSI2018}. The concept of scan locking gained significant interest from the researchers, which led to the development of various scan-chain-based locking schemes~\cite{wang2017secure,karmakar2018encrypt, rahman2021security}. Alrahis \el attack scan-chain-based locking schemes by unrolling the sequential circuit to a combinational one, which is then provided to the SAT solver to extract the secret key~\cite{alrahis2019scansat2}. Sisejkovic \el~\cite{sisejkovic2021deceptive} proposed an oracle-less structural analysis attack on MUX-based (SAAM) logic locking to exploit insertion flaws in MUX-based key gates. Deceptive multiplexer-based (D-MUX) logic locking is proposed to achieve functional secrecy~\cite{shamsi2019impossibility} against both SAAM and oracle-less machine learning-based attacks. As combinational feedback loops are not translatable to SAT problems, cyclic-based locking~\cite{roshanisefat2018srclock} is resistant to the initial SAT attack~\cite{subramanyan2015evaluating}. In addition, there has been extensive efforts in the proposal of non-functional logic locking techniques, such as scan-chain-based~\cite{rahman2021security, azar2021cryptography}, timing-based locking~\cite{sweeney2020latch, azar2021data, rahman2022oclock, xie2017delay}, and routing-based locking~\cite{kamali2019full, kamali2018lut, kamali2020interlock}.

As the research community explores new directions to understand an attacker's latent qualities, new attacks on logic locking have been proposed. An adversary may perform direct or indirect probing on the key interconnects or registers~\cite{rahman2020key}. An attacker is not required to understand the complete functionality of the circuit to perform these attacks. In this, Rahman \el demonstrated how an attacker could target the key registers and perform optical probing to gain knowledge regarding the fixed value for those registers. Following this, tampering attacks can also become an attacker's primary choice. Jain~\el exploited this notion to extract the secret key by implementing hardware Trojans inside the locked netlist~\cite{jain2021taal}. Without an oracle, the attribute of repeated functionality in the circuit can also be used to compare the locked unit functions and their unlocked version to predict the secret key~\cite{zhang2019tga}. This makes it necessary to lock all instances of unit functions in the entire netlist to achieve a secured logic locking scheme. CLIC-A~\cite{duvalsaint2019characterization, duvalsaint2019characterization_seq}, an ATPG-based attack, can break keys by applying constraint-based ATPG to propagate the target key bit to the primary output but suffers scalability in the dependent key count. Cyclic-based locking has suffered from modified SAT-based attack~\cite{shamsi2019icysat}, where cyclic-based constraints are placed to avoid infinite loops. Several non-functional-based locking can be broken by sequential-based attacks with limited scan access~\cite{kamali2022advances, limaye2019robust, el2017reverse} or SMT attack~\cite{azar2019smt}.

\subsubsection{Comparison of AFIA with CLIC-A~\cite{duvalsaint2019characterization}}

There is a major difference between our proposed \textit{AFIA} and \textit{CLIC-A}. First, the worst-time complexity for test generation regarding the total test pattern count for solving the key-dependent faults between \textit{AFIA} and \textit{CLIC-A} differs significantly. Our worst-case complexity of solving an $n$-bit key of non-mutable convergent key gates inside a single logic cone is at most $n$ test patterns with $\frac{n\cdot (n-1)}{2}$ injected faults (see Theorem~\ref{th:worst-case}). This is because AFIA determines each key bit by directly comparing the output response with the generated test pattern. On the other hand, \textit{CLIC-A} applies constraint-based ATPG by assigning constraint on the $(n-1)$-bit key and setting a stuck-at 0 at the target key line since placing don't cares (\textit{X}) on other key bits will not produce the desired test patterns for non-mutable convergent key gates. However, the simulated output from the constraint-based ATPG likely agrees with the oracle simulation under the same test pattern hardness of logic locking. Note that it does not mean that the constraints placed on the $(n-1)$-bit key is the correct key values, and it only indicates that, under the particular test pattern, the output from the netlist with constraints and the stuck-at fault matches with the oracle output. Instead, \textit{CLIC-A} has to perform additional constraints in ATPG and check the output against the oracle to ensure that key values are correct. The worst-case complexity in the total test pattern count for \textit{CLIC-A} is exponential $\bm{O}(2^{(n-1)})$. 

Let us consider an example of an unlocked circuit in Figure~\ref{fig:clic-a}(a) locked with 6 dependent XOR/XNOR key gates, as shown in Figure~\ref{fig:clic-a}, whose correct key $\{k_0,...,k_5\}=\{\textit{001100}\}$. As none of the keys can be sensitized to the output without knowing the correct value for the other five, CLIC-A runs constraint-based ATPG and sets \textit{sa0} to $k_0$. Suppose ATPG returns a test pattern with key vector $\{k_0,...,k_5\}=\{\textit{110011}\}$ and input vector $\{x_0,...,x_3\}=\{\textit{1100}\}$, along with the simulated fault-free output $\mathtt{atpg\text{-}sim}(x_0,...,x_3,k_0,...,k_5)=1$, as shown in Figure~\ref{fig:clic-a}(c). Although the output of the simulated netlist matches with the oracle response $y=1$, the key value returned by constraint-based ATPG is incorrect. There is no method for CLIC-A to check whether the key vector is the actual key other than appending it as a constraint to ATPG so that the test pattern returned at the next iteration would be different from the current one. The worst-case complexity for CLIC-A to fully determine the 6-bit key is to iterate through all possible combinations of the remaining 5-bit key (excluding $k_0$ with \textit{sa0}), resulting in a $2^{5}$ test pattern count to break the locking scheme with dependent keys. On the other hand, AFIA only requires $6$ test patterns to determine all 6-bit keys, which is much more efficient than CLIC-A. In summary, test pattern generation for CLIC-A becomes infeasible if there are a large number of dependent keys in a logic cone.

\subsubsection{Comparison of AFIA with Key Sensitization Attack~\cite{rajendran2012security}}

There is a similarity between our proposed AFIA and sensitization attack~\cite{rajendran2012security}. The similarity between these approaches is the sensitization, i.e., the propagation, of the key to the output. However, our approach is more general for the following reason. First, the sensitization attack does not need fault activation as the key gates are XOR/XNOR gates, and the key can propagate to the key gate output for both input 0 and 1. However, this may not hold for non-XOR-based locking techniques. For example, MUX-based locking has keys connected to the input of AND gate instead of the XOR gate, where one needs to set the other AND input to the non-controlling value 1 for fault activation. Besides, it is common practice for recent locking techniques to synthesize the locked benchmark after key insertion. The synthesis tool can optimize the key gate with other gate types, which results in keys directly connected to non-XOR gates like AOI, NAND, etc. To propagate the key value to the primary output, having only key sensitization without the activation would not work for synthesized locked circuits. For example, we can break \textit{SFLL-hd}~\cite{yasin2017provably}, \textit{SFLL-flex}~\cite{yasin2017provably}, and \textit{SFLL-rem}~\cite{github-sfllrem} with $n$ patterns for a $n$-bit key (see Section~\ref{subsec:attack-analysis}), where sensitization attack requires brute force attack ($\bm{O}(2^n)$) to all the non-mutable keys in the \textit{SFLL} restoration circuitry. Second, our proposed fault injection can break non-mutable convergent key gates from strong logic locking, which is the countermeasure proposed in a sensitization attack. AFIA only needs at most $n$ (see Theorem~\ref{th:pattern-count}) test patterns for a $n$-bit pairwise non-mutable convergent keys, but it would take $\bm{O}(2^{n})$ in the worst case to brute force the correct key under sensitization attack~\cite{rajendran2012security}.

\subsubsection{Dissimilarities between Logic Locking and Cryptosystems} \label{sec:comparsion-crypto}
There has been considerable efforts~\cite{yasin2017provably,beerel2022towards} in the proposal of formal analysis on logic locking through introducing similar concepts used in cryptography. However, logic locking techniques differ from various cryptosystems in two aspects. First, the objective for logic locking and cryptosystem is different. The cryptographic algorithm ensures that the secret key is fully integrated with the input plaintext (\ie, the addRoundKey in all ten rounds of AES encryption). Logic locking, however, focuses on perturbing the output, commonly by XORing a 1-bit key with a wire in the circuit,  under certain input patterns, where no repeated insertion of the same key bit or its derived value to elsewhere. Second, the output of a locked circuit and the ciphertext of a cryptosystem behaves differently under input combinations. A locked circuit under an incorrect key may behave identically to the oracle (or locked circuit with the correct key) under multiple input patterns. This is particularly true for Post-SAT locking solutions, \ie, SARLock~\cite{yasin2016sarlock}, Anti-SAT~\cite{xie2019anti}, SFLL~\cite{yasin2017provably,sengupta2020truly}, CAS-Lock\cite{shakya2020cas}, where the output corruptibility for incorrect keys is reduced to the bare minimum. This means that a locked circuit with an incorrect key behaves exactly as an unlocked circuit under an exponential number of input combinations. 

The cryptographic algorithms, especially the block ciphers, are built on confusion and diffusion properties recommended by Claude Shannon in his classic 1949 paper~\cite{shannon1949communication}. This results in a large number of output bit changes in the output (ciphertext) even for a single bit change in the input (plaintext)~\cite{AES2001,paar2009understanding}. For example, AES has 10/12/14 rounds of diffusion and confusion operations depending on the key size of 128/192/256 bits. It is thus trivial to launch differential fault analysis as it will guarantee the change in the output, where one can compare the faulty and fault-free responses by injecting a fault into a key register, one at a time. On the contrary, digital circuits generally do not have repeated layers of operations like block ciphers. Digital circuits, except crypto accelerators, are designed to meet the user specification of speed, power, and area, and the functionality (change in output) depends on the user's needs. It is well understood and verified that digital circuits have lots of don't cares ($Xs$) in the inputs. The VLSI test community adopted test compression~\cite{rajski2004embedded, SynopsysTestCompression} to reduce the test pins and resultant test times. As there exists a large number of $Xs$ in the test pattern, it is infeasible to apply a random pattern and expect it to propagate the target key bit (e.g., a stuck-at fault at the key line) to output. For example, if there are 70\% Xs in a test pattern with a 100 input cone [which is very common], the probability of a random pattern propagating the key to the output is $2^{30}/2^{100}\approx 0$. The effect of some keys in a locked circuit can even be muted due to the circuit's structural and functional behavior~\cite{rajendran2012security}, which is in direct contrast to cryptosystems, where every output is influenced by all key bits~\cite{paar2009understanding}.

\subsection{Fault Injection Methods}\label{subsec:fault-techniques}
Over the years, several threats and methods have emerged to break a cryptosystem without performing mathematical analysis or brute force attacks. Using these attacks, an adversary can subvert the security of protection schemes, primarily through extracting or estimating the secret key using physical attacks. Fault injection attacks intentionally disturb the computation of cryptosystems in order to induce errors in the output response. To achieve this, external fault injection is performed through invasive or non-invasive techniques. This is followed by the exploitation of erroneous output to extract information from the device.

Fault-based analysis on cryptosystems was first presented theoretically by Boneh \el on RSA~\cite{boneh1997importance}. This contribution initiated a new research direction to study the effect of fault attacks on cryptographic devices. The comparison between the correct and faulty encryption results has been demonstrated as an effective attack to obtain information regarding the secret key~\cite{piret2003differential,dusart2003differential,lee2019high}. These can be realized into different categories:

\vspace{2px}
\noindent $\tiny \bullet$ \textit{Clock Glitch}: The devices under attack are supplied with an altered clock signal which contains a shorter clock pulse than the normal operating clock pulse. For successfully inducing a fault, these clock glitches applied are much shorter than the circuit's tolerable variation limit for the clock pulse. This results in setup time violations in the circuit and skipping instructions from the correct order of execution~\cite{fukunaga2009practical, selmane2008practical}.

\vspace{2px}
\noindent $\tiny \bullet$ \textit{Power Variation}: This technique can be further bifurcated into two subcategories: either the malicious entity may choose to provide a low power supply to the system (also abbreviated as underfeeding), or the adversary may choose to influence the power line with spikes. This adversely affects the set-up time and influences the normal execution of operations. The state elements in the circuit are triggered without the input reaching any stable value, causing a state transition to skip operations or altering the sequence of execution~\cite{guilley2008silicon, barenghi2013fault,barenghi2010low}.

\vspace{2px}
\noindent $\tiny \bullet$ \textit{Electromagnetic Pulses/Radiation}: The eddy current generated by an active coil can be used to precisely inject faults at a specific location in the chip. This method does not require the chip to be decapsulated in order to inject the fault. However, the adversary is required to possess information regarding specific modules and their location inside the chip~\cite{schmidt2007optical, dehbaoui2012electromagnetic}.

\vspace{2px}
\noindent $\tiny \bullet$ \textit{Laser}: Fault injection using lasers is also regarded as a very efficient method because it can precisely induce a fault at an individual register to change its value~\cite{barenghi2012fault}. For optical fault injection, the laser can be focused on a specific region of the chip from the backside or front side. However, due to the metal layers on the front side, it is preferred to perform the attack on the backside of the chip. Skorobogatov \el~\cite{skorobogatov2002optical} first demonstrated the effectiveness of this method by using a flashgun to inject fault to flip a bit in the SRAM cell. Several other research groups also utilized and proposed different variants of this method to study the security of cryptographic primitives~\cite{skorobogatov2010optical, canivet2011glitch, pouget2007tools, selmke2016attack}.

\vspace{2px}
\noindent $\tiny \bullet$ \textit{Focused-ion Beam~(FIB)}: The most effective and expensive fault injection technique is devised with focused ion beam~(FIB)~\cite{torrance2009state}. This method enables cutting/connecting wires and even operates through various layers of the IC fabricated in the latest technology nodes~\cite{wu2014precise}. 

\vspace{2px}
\noindent $\tiny \bullet$ \textit{Software Implemented Fault Injection}:  This technique produces errors through software that would have been produced when a fault targeted the hardware. It involves the modification of programs running on the target system to provide the ability to perform the fault injection. It does not require dedicated complex hardware, a gate-level netlist, or RTL models that are described in hardware description languages. The faults are injected into accessible memory cells such as registers and memories through software that represent the most sensitive zones of the chip~\cite{skarin2010goofi,tsai1995ftape,hsueh1997fault}.

\section{Background} \label{sec:background}
In this section, we present a differential fault analysis (DFA) attack introduced in~\cite{jain2020atpg}. Our attack method is inspired by VLSI test pattern generation. One test pattern is able to detect a single stuck-at fault with the propagation of this fault to the primary output. Since key values from tamper-proof non-volatile memory are loaded to key registers, these registers are the potential locations for stuck-at-faults. With an active chip at hand, the adversary could target these registers and extract the secret key.  

\subsection{Threat Model}
The threat model defines the capabilities of an adversary and its standing in the IC manufacturing and supply chain. It is very important to know an attacker's ability and the available resources/tools to estimate its potential to launch the attack. The design house or entity designing the chip is assumed to be trusted. The attacker is assumed to be the untrusted foundry or a reverse engineer having access to the following:
\begin{itemize}
    \item The locked netlist of a circuit. An untrusted foundry has access to all the layout information, which can be extracted from the GDSII or OASIS file. Also, this locked netlist can be reconstructed by reverse engineering the fabricated chip in a layer-by-layer manner with advanced technological tools~\cite{torrance2009state}.
    \item An unlocked and fully functional chip is accessible to the adversary since the chip is publicly available from the market.  
    \item A fault injection equipment is essential to launch the attack. It is not mandatory to use high-end fault injection equipment. The main operation is to inject faults at the locations of key registers (all the flip-flops) on a de-packaged/packaged chip. Precise control is not necessary as we target all the flip-flops simultaneously. An adversary can also choose the software methods to inject faults at these flip-flops. Once the register is at the faulty state, the scan enable (\textit{SE}) signal needs to be assigned to put the chip in test mode. 
    \item The attacker has the know-how to determine the location of the tamper-proof memory. Then, it will be trivial for an adversary to find the location of the key register in a netlist, as it can easily trace the route from the tamper-proof memory.
\end{itemize}

\noindent\textbf{Notations}: To maintain uniformity across the entire paper, we represent frequently used terms with the defined notations, and they will be referred to with these notations in the following subsections.

\begin{itemize} 
    \item $\mathcal{K}$ denotes key length or key size, i.e., the number of bits in the key.
    \item $K$ denotes the keyspace; $K=\{k_0, k_1, \ldots k_{\mathcal{K}-1}\}$.
    \item The locked netlist of a circuit is abbreviated as $C_L$. The unlocked and fully functional chip/circuit, whose tamper-proof memory has been programmed with the correct key, is denoted by $C_{O}$. The two versions of fault-injected circuits are described as follows:
    \begin{itemize}
        \item $C_{F}$ represents a locked circuit where all the key lines ($\mathcal{K}$) are injected with logic 1 (or logic 0) faults. We call it the circuit with faulty key registers for differential fault analysis (DFA).
        \item $C_{A}$ represents the same locked circuit in which $(\mathcal{K}-1)$ key lines are injected with the same logic 1 (or logic 0) faults, leaving one key line fault-free. We denote this circuit as a fault-free circuit for DFA.  
    \end{itemize}
    
    For any given circuit, we assume the primary inputs~($PI$) of size \textit{$|PI|$}, primary outputs~($PO$) of size \textit{$|PO|$}, and secret key~($K$) size of \textit{$\mathcal{K}$}. We also use key lines or key registers alternatively throughout this paper as their effects are the same on a circuit.
    \vspace{5px}
    \item Stuck-at fault (\textit{saf}): For any circuit modeled as a combination of Boolean gates, stuck-at fault is defined by permanently setting an interconnect to either 1 or 0 in order to generate a test vector to propagate the fault value at the output. Each connecting line can have two types of faults, namely, stuck-at-0 (\textit{sa0}) and stuck-at-1 (\textit{sa1}). Stuck-at faults can be present at the input or output of any logic gates~\cite{bushnell2004essentials}.
    \vspace{5px}
    \item Injected fault: A fault is injected at the key register using a fault injection method (see details in Section~\ref{sec:prior-work}).   
\end{itemize}

Note that \textit{saf} is an abstract representation of a defect to generate test patterns, whereas an injected fault is the manifestation of a faulty logic state due to fault injection.

\begin{figure}[ht]
    \centering
    \includegraphics[width=\linewidth]{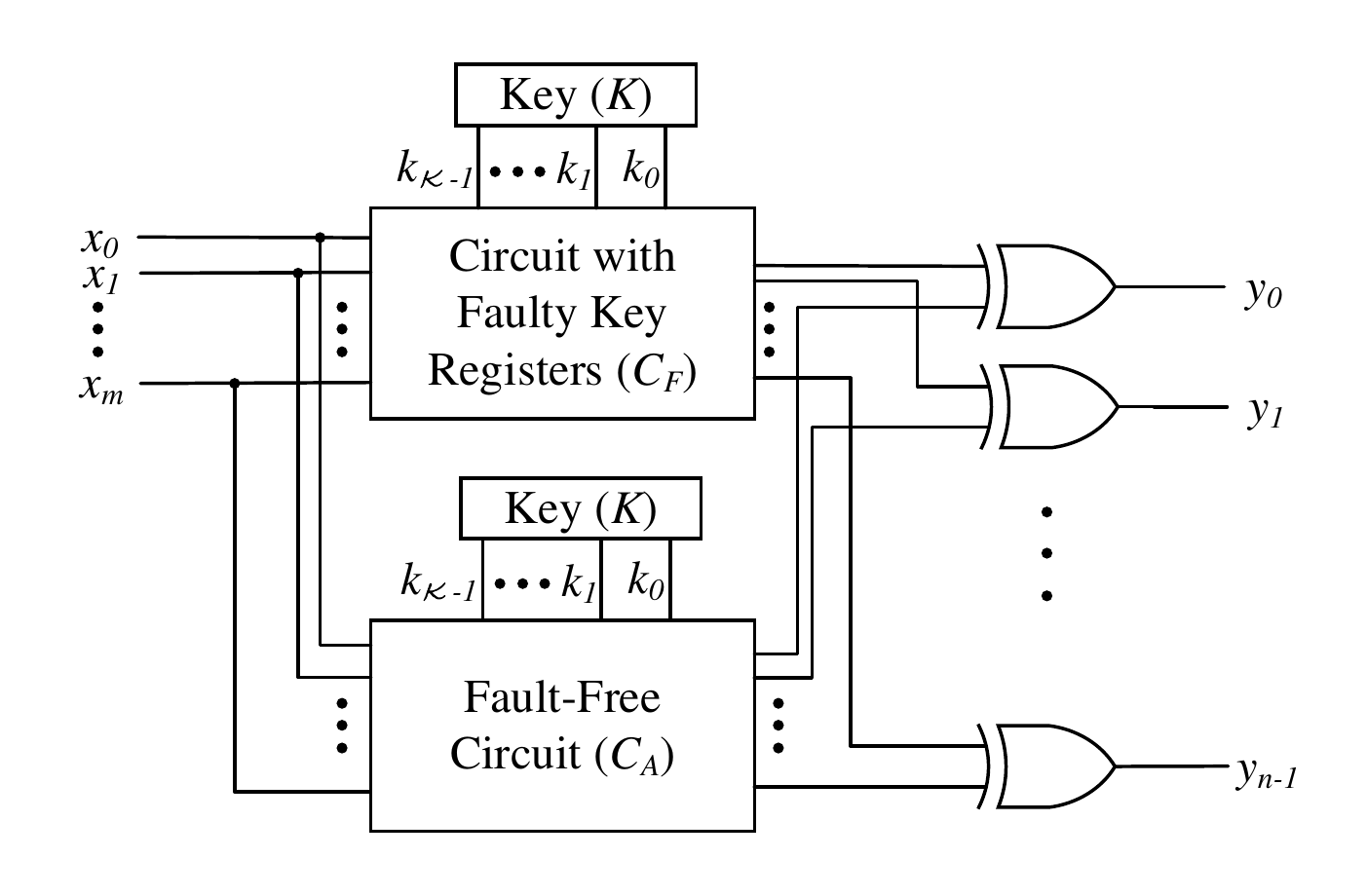}
    \caption{The abstract representation of our DFA attack.} 
    \label{fig:absract-FIAL}\vspace{-10px}
\end{figure}

\subsection{Differential Fault Analysis (DFA) Attack Methodology}\label{subsec:dfa}
This fault injection attack relies on differential fault analysis. The captured output response of the circuit with faulty key registers with the corresponding fault-free circuits can reveal the key. Applying any fault injection methods (see the details in Section~\ref{subsec:fault-techniques}), the attacker can create the faulty chip/circuit. Figure~\ref{fig:absract-FIAL} shows an abstract representation of DFA. The fault-free circuit ($C_A$) is an unlocked chip ($C_{O}$) bought from the market whose key bits need to be retrieved. Except for the key-bit targeted to be extracted, all remaining key registers are fixed to a particular faulty value of either 0 or 1 corresponding to the selected fault. A circuit with faulty key registers ($C_F$) uses the same chip, and it is injected with a particular fault to keep all the key registers or interconnects to a faulty value of logic 1 or 0. One input pattern is first applied to $C_A$, and its response is collected. The same input pattern is then applied to the $C_F$ to collect the faulty response. By XORing the corresponding circuit response, any output discrepancy between fault-free circuit ($C_{A}$) and the circuit with faulty key registers ($C_F$) is revealed. If both the circuits differ in their responses, the XORed output will be 1; otherwise, it will be 0. If we find an input pattern that produces a conflicting result for both $C_A$ and $C_F$ only for one key bit, the key value can be predicted. The key value is the same as the injected fault value if the XORed output is of logic 0; otherwise, the key value is a complement to the injected fault. 

The attack can be described as follows: 
\begin{itemize}
    \item \textit{Step-1}: The first step is to select an input pattern that produces complementary results for the fault-free ($C_A$) and faulty ($C_F$) circuits. The input pattern needs to satisfy the following property --
    it must sensitize only one key bit to the primary output(s). In other words, only the response of one key bit is visible at the \textit{PO}, keeping all other key bits at logic 1s (or 0s). If this property is not satisfied, it will be impractical to reach a conclusion regarding the value of a key bit. \textit{Now the question is, how can we find if such a pattern exists in the entire input space ($\xi$)}. 
    
    \hspace{10px} To meet this requirement, our method relies on stuck-at faults~(\textit{saf}) based constrained ATPG to obtain the specific input test patterns (see details in Section~\ref{subsec:test-pattern-gen}). Considering the fact that the adversary has access to the locked netlist, it can generate test patterns to detect \textit{sa1} or \textit{sa0} at any key lines and add constraints to other key lines~(logic 1 and 0 for \textit{sa1} and \textit{sa0}, respectively). A single fault, either \textit{sa0} or \textit{sa1} on a key line, is sufficient to determine the value of that key bit. Therefore, we have selected \textit{sa1}, and the following subsections are explained considering this fault only. This process is iterated over all the key bits to obtain {$\mathcal{K}$} test patterns. The algorithm to generate the complete test pattern set is provided in Algorithm~Section~\ref{subsec:test-pattern-gen}.
    
    \vspace{5px}
    \item \textit{Step-2}: The complete set of generated test patterns is applied to the fault-induced functional circuit with faulty key registers~($C_F$). The circuit is obtained by injecting logic 1 fault on the key registers if \textit{sa1} is selected in the previous step; else, the circuit is injected with logic 0 faults for \textit{sa0}. The responses are collected for later comparison with fault-free responses. For $C_A$, test patterns are applied such that it matches the fault modifications in the circuit. For example, the test pattern for the first key is applied to the circuit when the circuit instance does not pertain to any fault on its corresponding key register and holds the correct key value while the remaining key registers are set to logic 1 (for \textit{sa1}) or 0 (for \textit{sa0}). For the next key-bit, ($C_A$) instance is created by excluding this selected key bit from any fault while keeping all other key registers to logic 1 (for \textit{sa1}) or 0 (for \textit{sa0}). This process is repeated for all key bits, and their responses are collected for comparison in the subsequent step.

    \vspace{5px}
    \item \textit{Step-3}: The adversary will make the decision regarding the key value from the observed differences in the output responses of ($C_A$) and ($C_F$). For any test pattern corresponding to a particular key bit, when the outputs from both circuits are the same, it implies that the injected fault on the key lines in a $C_F$ circuit is the same as the correct key bit; only then will the outputs of both ICs be same. Otherwise, when $C_F$ and $C_A$ differ in their output response, it concludes the correct key bit is a complement to the induced fault. This process is repeated for all key bits. In this manner, the key value can be extracted by comparing the output responses of both circuits for the same primary input pattern.
    
\end{itemize}

\subsection{Example} \label{subsec:examples}
We choose a combinational circuit as an example for simplicity to demonstrate the attack. The attack is valid for sequential circuits, as well, as it can be transformed into a combinational circuit in the scan mode, where all the internal flip-flops can be reached directly through the scan-chains~\cite{bushnell2004essentials}.    

\begin{figure}[ht]
    \centering
    \includegraphics[width=\linewidth]{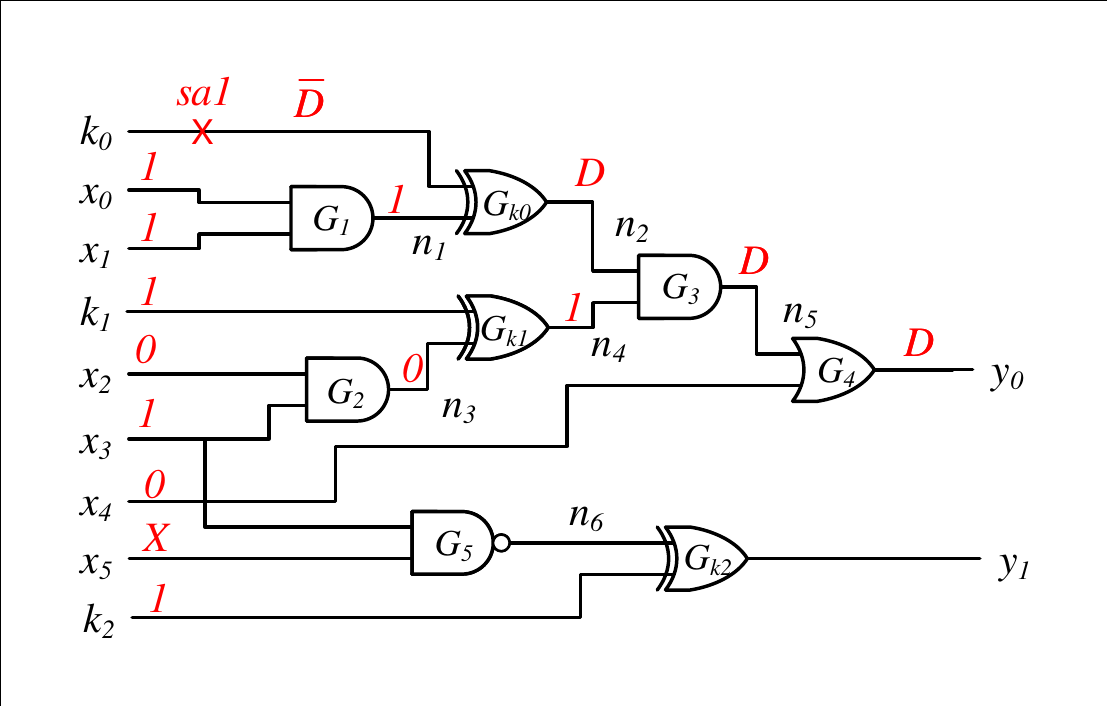}\vspace{-10px}
    \caption{Test pattern generation considering a \textit{sa1} at key line $k_0$ with constraint $k_1=1$ and $k_2=1$. Test pattern, $P_1=[11010X]$ can detect a \textit{sa1} at $k_0$. }
    \label{fig:fault-propagation}
\end{figure}

Figure~\ref{fig:fault-propagation} shows the test pattern generation on a circuit locked with a 3-bit secret key, where the propagation of $k_0$ is dependent on $k_1$ and vice versa. First, we target to find out the value of $k_0$. A test pattern $P_1$ is generated to detect a \textit{sa1} fault at $k_0$ with constraint $k_1=1$ and $k_2=1$ (adding faults on all the key lines except the target key bit). As the value of $k_1$ is known during the pattern generation, the effect of the \textit{sa1} at $k_0$ will be propagated to the primary output $y_0$. For a fault value $\overline{D}$ at $k_0$, if $[x_0~x_1] = [1~1]$ then $D$ propagates to $n_2$. To propagate the value at $n_2$ to the output of $G_3$, its other input ($n_4$) needs to attain logic 1. Since $k_1=1$ due to injected fault which is set as a constraint in ATPG tool, $n_4=1$ for $n_3=0$ which implies $[x_2~x_3] = [0~1]$. At last, $x_4=0$ propagates $D$ propagates the value at $n_5$ to the primary output $y_0$. The output $y_0$ can be observed as $D$ for the test pattern $P_1=[1~1~0~1~0~X]$. Finally, to perform the DFA, this pattern $P_1$ needs to be applied to both $C_A$ and $C_F$ to determine the value of $k_0$. Similar analysis can be performed for the other two key bits, $k_1$ and $k_2$.

\subsection{Test Pattern Generation} \label{subsec:test-pattern-gen}
To generate the test pattern set, an automated process relying on constrained ATPG is performed. The detailed steps to be followed are provided in Algorithm~\ref{alg:TP-generation}. Synopsys Design Compiler~\cite{SynopsysDC} is utilized to generate the technology-dependent gate level netlist and its test protocol from the RTL design. A test protocol is required for specifying signals and initialization requirements associated with design rule checking in Synopsys TetraMAX~\cite{SynopsysTetraMAX}. Automatic test generation tool TetraMAX generates the test patterns for the respective faults along with constraints for the locked gate level netlist.   

\setlength{\textfloatsep}{5pt}
\begin{algorithm}[t]
\SetAlgoLined
\SetKwInOut{Input}{Input}\SetKwInOut{Output}{Output}
\Input{~Locked gate-level netlist ($C_L$), test protocol ($T$), and standard cell library}
\Output{~Test pattern (\textit{P}) set}
\vspace{-5px}
\nonl \rule{0.45\textwidth}{0.4pt}

Read the locked netlist ($C_L$) \;
Read standard cell library \;

Run design rule check with test protocol generated from design compiler \;

Determine key size~{$\mathcal{K}$} from $C_L$ \; 

\For{$i\gets0$ \KwTo ($\mathcal{K}-1$) }{
    Add a \textit{sa1} fault at key line $k_i$ \;
        
        \For{$j\gets0$ \KwTo ($\mathcal{K}-1$)}{
            \If{$i \neq j$}{
                Add constraint at $k_j$ to logic 1 \;
            }
        }
        Run ATPG to detect the fault \;
        Add the test pattern, $P_i$ to the pattern set, \textit{P} \;
        Remove all faults \;
        Remove all constraints \;
    }
Report the test pattern set, \textit{P} \;
\caption{Test pattern generation for constrained ATPG in DFA} \label{alg:TP-generation}
\end{algorithm} 

The inputs to the algorithm are the locked gate-level netlist~($C_L$), Design Compiler generated test protocol~($T$), and the standard cell library. The algorithm starts with reading the locked netlist and standard cell library~(Lines 1-2). The ATPG tool runs the design rule check with the test protocol obtained from the Design Compiler to check for any violation (Line 3). Only upon the completion of this step is the fault model environment set up in the tool. The size of the key ($\mathcal{K}$) is determined by analyzing $C_L$ (Line 4). The remaining key lines are selected one by one to generate test patterns (Line 5). A stuck-at-1 fault is added at the $i^{th}$ key line to generate $P_i$ (Line 6). The ATPG constraints (logic 1) are added to other key lines (Lines 7-11). A test pattern $P_i$ is generated to detect the \textit{sa1} at the $i^{th}$ key line (Lines 12-13) and added to the pattern set, $P$. All the added constraints and faults are removed to generate the $(i+1)^{th}$ test pattern (Lines 14-15). Finally, the algorithm reports all the test patterns, $P$ (Line 17).

\vspace{-5px}
\section{AFIA: ATPG-guided Fault Injection Attack} \label{sec:AFIA}

The objective of an adversary is to reduce the number of injected faults to launch an efficient attack. The DFA presented in Section~\ref{subsec:dfa} requires $2\mathcal{K}-1$ faults to determine a single key bit, where $\mathcal{K}$ denotes the secret key size. This severely limits the adversary's capability as injecting a large number of faults is challenging from the fault injection equipment's perspective. All these faults need to be injected when applying the test pattern to evaluate one key bit. In this section, we present an efficient attack and denoted as \textit{\textbf{AFIA}}, an \textit{\textbf{A}}TPG-guided \textit{\textbf{F}}ault \textit{\textbf{I}}njection \textit{\textbf{A}}ttack based on key sensitization. This new attack only requires injecting the fault on a key register if there is a dependency among keys. The threat model remains the same as DFA. We consider an untrusted foundry to have access to the gate-level netlist and can generate manufacturing test patterns.

\begin{figure}[t]
    \centering
    \includegraphics[width=\linewidth]{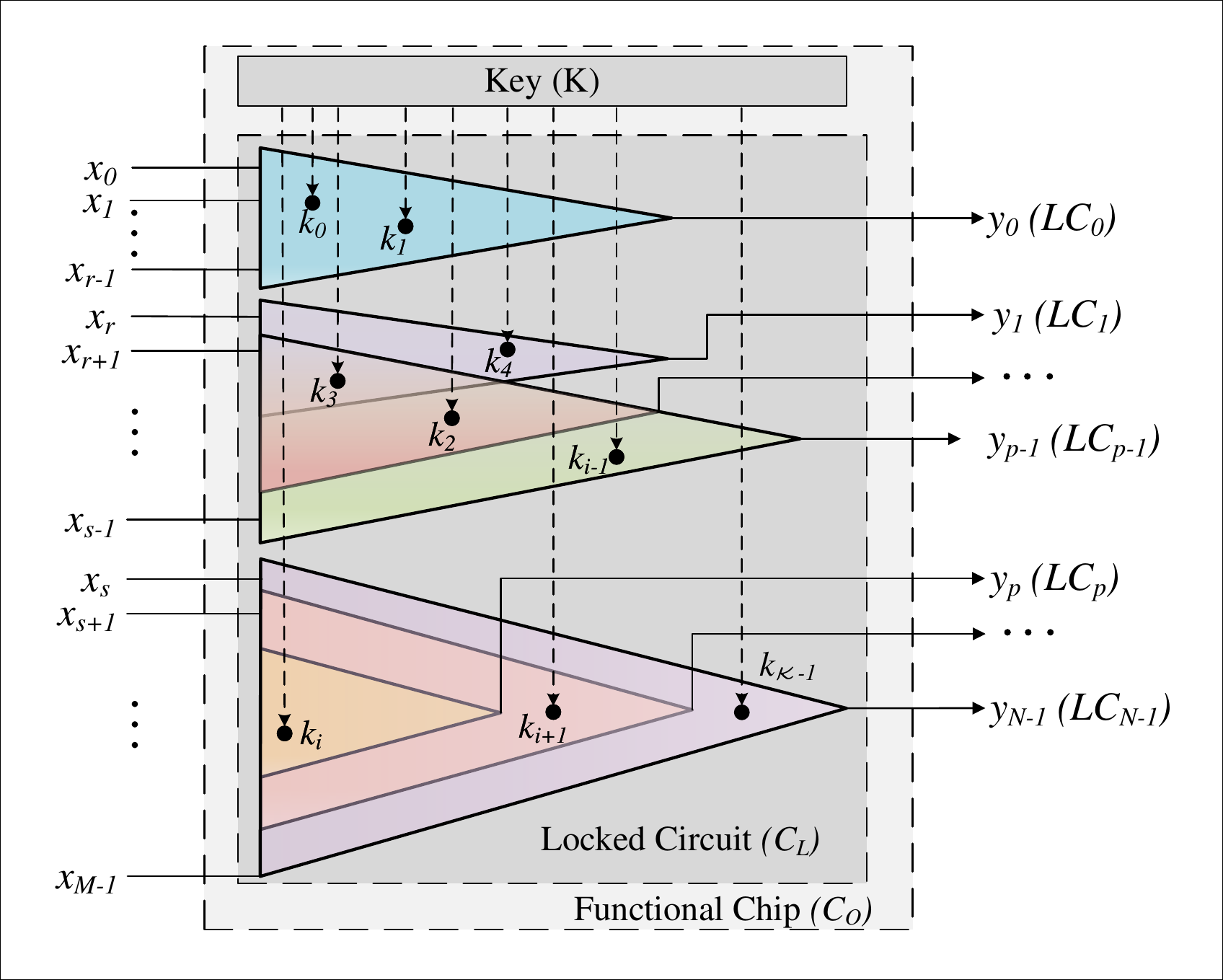} 
    \caption{An abstract view of a locked circuit.}
    \label{fig:FI} 
\end{figure}

\subsection{Overall Approach} The proposed attack AFIA evaluates one key bit at a time iteratively and can be summarized by the following steps:

\begin{itemize}
    \item \textit{Step-1}: First, AFIA analyzes the locked circuit $C_L$ and its logic cones. Some cones are completely independent (e.g., $LC_0$ in Figure \ref{fig:FI}), some cones share few inputs (e.g., $LC_1$, \ldots, $LC_{p-1}$), and the others share the same inputs (e.g., $LC_p$, \ldots, $LC_{N-1}$). It is necessary to determine keys from cones that are a subset of other larger cones (if any) first during the test pattern generation in order to reduce the number of injected faults. For an independent logic cone (say $LC_0$), we can propagate the keys one at a time without injecting faults at keys of other cones. If the two cones are overlapped, it is beneficial to sensitize keys to a cone with fewer unknown keys. 

    \item \textit{Step-2}: Similar to DFA, it requires an input pattern to derive a correct key bit. We denote this key bit as the target key bit. Constraints are set on the recovered key lines, where no fault injections are needed. The attacker performs fault injection (\textit{Step-3}) solely on keys (in the same cone) that block the propagation of the targeted key bit. The blocking key set is determined by the returned test patterns from ATPG TetraMAX~\cite{SynopsysTetraMAX}. Once a key bit is determined, AFIA targets the next key bit of the same cone by putting the previously obtained keys as constraints during the test pattern generation.

    \item \textit{Step-3}: The last step applies fault injections on functional chip $C_O$ using the generated test patterns of \textit{Step-2}. The targeted key value can be extracted by comparing the fault-injected output against the output pattern computed by ATPG. When the value of all the targeted key bits in one text pattern has been identified, we can constrain these bits with their actual values in ATPG in the subsequent pattern.
\end{itemize}

AFIA is an iterative method, where \textit{Step-2} is performed to generate test patterns, and \textit{Step-3} injects fault and applies that pattern to determine the targeted key bit. Once this targeted key is determined, it will be used as a constraint in \textit{Step-2}. The following subsections present these three steps in detail.

\subsection{Cone Analysis}\label{subsec:coneA}
The goal of this proposed attack is to apply minimal fault injections to recover the complete key set. It is ideal for the adversary to inject faults at key registers only when necessary. In general, not all keys prevent the propagation of the target key bit, as many of the keys are often distributed across the netlist and reside in different logic cones. A logic cone is a part of the combinational logic of a digital circuit that represents a Boolean function and is generally bordered by an output and multiple inputs~\cite{bushnell2004essentials}. Thus, cone analysis can effectively separate the dependence of different groups of key bits, where one group does not block the propagation of the key bits in other groups. We propose to analyze the internal structure of the locked netlist $C_L$ by creating a directed graph $G$ from it. We denote that both the inputs and logic gates' outputs are nodes. A directed edge exists from Node $n_1$ to Node $n_2$ if and only if they are associated with a logic gate. Intuitively, a circuit with $N$ outputs has $N$ logic cones, as in Figure~\ref{fig:FI}. Note that the number of cones can be only primary outputs (POs) for a combinational circuit or the sum of POs and pseudo primary outputs (PPOs) for a sequential circuit~\cite{bushnell2004essentials}. All the inputs and logic gates whose logical values affect $y_j$ belong to logic cone $LC_j$. The graph representation of logic cone $LC_j$ with sink $y_j$ is a subgraph of $G$. 

Two possible scenarios might occur during the locking of a netlist. Key bit(s) can be placed uniquely in a logic cone and cannot be sensitized to any other POs/PPOs except the cone's output. Other key bits can be placed in the intersection of multiple cones and can be sensitized through any of these. We observe that the majority of the key bits are inside the intersections with multiple cones. What should be the best strategy to propagate a key bit to one of the POs/PPOs when there exist multiple sensitization paths? Our objective is to reduce the number of faults to sensitize a key bit to a PO/PPO, and it is beneficial to select a cone with the minimum number of keys. Note that the keys in a cone can block the propagation of a targeted key in that same cone {only} and requires fault injection to set a specific value to these blocking keys. It is, thus, necessary to construct a key-cone association matrix \textit{A} to capture the correlation between the logic cones and the key bits. The matrix $A$ not only provides insight on which keys (and how many of them) are inside a logic cone but also offers a structured view of whether a key belongs to multiple logic cones, and is presented as follows: 
\begin{eqnarray*}
A &=& [a_{i,j}]_{\mathcal{K}\times N} \\
  &=&\hspace{-10pt} \kbordermatrix{
    & LC_0 & LC_1 & \dots & LC_{N-1}\\ 
    k_0 & a_{0,0} & a_{0,1} & \dots & a_{0,N-1}\\
    k_1 & a_{1,0} & a_{1,1} & \dots & a_{1,N-1}\\
    k_2 & a_{2,0} & a_{2,1} & \dots & a_{2,N-1}\\
    \vdots & \vdots & \vdots & \ddots & \vdots\\
    k_{\mathcal{K}-1} & ~a_{\mathcal{K}-1,0} & a_{\mathcal{K}-1,1} & \dots & a_{\mathcal{K}-1,N-1}\\ 
  },
\end{eqnarray*}

\noindent where, $a_{i,j} \in \{0, 1\}$, and $a_{i,j}=1$ if key $k_i$ is present in cone $LC_j$, otherwise, $a_{i,j}=0$.

It is straightforward for the attacker that, if he/she picks cone $LC_j$ and key bit $k_i$ (if its value is still unknown) in this cone, only keys (other than $k_i$) residing in $LC_j$ could potentially impede the propagation of $k_i$ to the output $y_j$. This is advantageous to the attacker because the keys outside of cone $LC_j$ would not, by any means, affect the propagation of  $k_i$ to $y_j$. Thus, he/she can safely ignore these keys, and it does not matter whether he/she already has the correct logical values for them or not.

\begin{figure}[t]
\centering
\includegraphics[width=\linewidth]{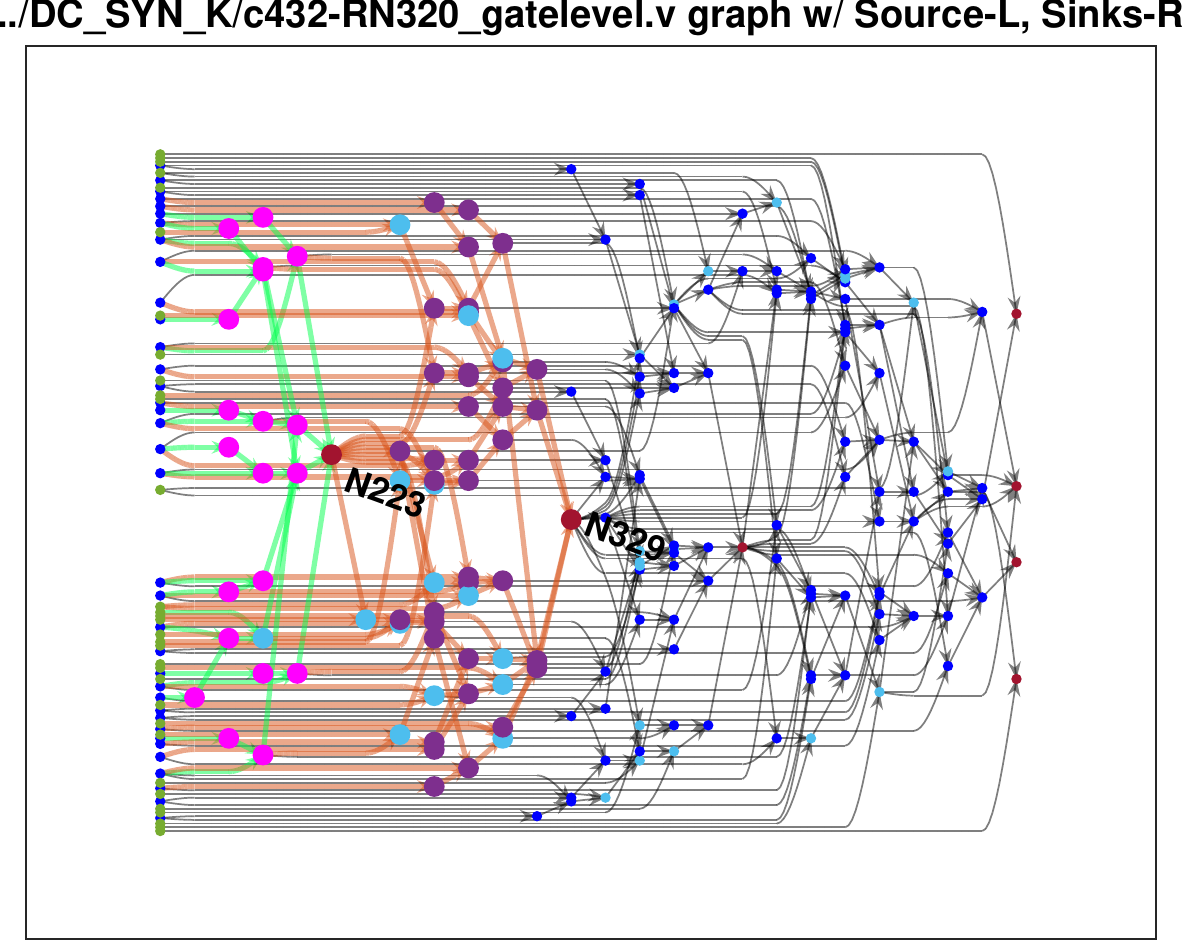}
\caption{Directed graph of locked c432-RN320 netlist with a 32-bit key.} 
\label{fig:c432-RN320}
\end{figure}

For example, the directed graph representation of locked netlist c432-RN320 with a 32-bit key~\cite{salmani2018trust} is shown in Figure \ref{fig:c432-RN320}. Output nodes are in red, key registers in green (at the left-most level), key gates in cyan, remaining input (at the left-most level), and gates in blue. The top two logic cones with the fewest keys are $LC_{N223}$ of output $N223$ and $LC_{N329}$ of output $N329$. Logic cone $LC_{N223}$ has only one key (\textit{keyIn\_0\_4}, with key gate highlighted) (all other nodes and edges are in magenta and light green). Logic cone $LC_{N329}$ is the superset of $LC_{N223}$, and it contains additional thirteen keys (all other nodes and edges exclusively in $LC_{N329}$ are in purple and orange). With AFIA, the only key in $LC_{N223}$ is determined first, followed by the remaining thirteen keys in $LC_{N329}$. Because of the only key in $LC_{N223}$, no fault injection is necessary for this key’s propagation to $N223$.

\subsection{Test Pattern Generation}\label{subsec:tpg}
Once the cone analysis is performed, it is required to generate test patterns so that a targeted key can be sensitized to one of the PO/PPO. The test pattern generation process is similar to the DFA presented in Section~\ref{subsec:dfa} except with a much lesser number of ATPG constraints. We treat undetermined keys as inputs during the test pattern generation and the recovered keys as ATPG constraints. As the secret key remains the same in an unlocked chip, it is unnecessary to inject faults at the recovered key bits as their values are known during the test pattern generation. On the other hand, we need to inject faults at unknown and yet to be determined key lines. However, it is not necessary to inject faults at all of them. We use the ATPG tool to determine whether one or more unknown key bits do not block the propagation of the targeted key bit. As we treat unknown keys as inputs, the ATPG tool can generate a pattern that might contain $X's$ at some of the key lines (using \textit{set\_atpg -fill X}~\cite{SynopsysTetraMAX}), and we do not need to inject faults at these bits. This allows an adversary to reduce the number of fault injections further. Similar to DFA, a stuck-at fault, \textit{sa1} (or \textit{sa0}), is placed on the target key bit with constraints on recovered key bits during the ATPG. When TetraMAX~\cite{SynopsysTetraMAX} returns a test pattern, the attacker applies the pattern and injects faults (presented in Section~\ref{subsec:fi}) to sensitize the target key bit at the PO/PPO. After recovering one key bit, AFIA sets ATPG constraints on the recovered key lines, generates another test pattern, and applies it to sensitize the next key. 

\subsection{Fault Injection} \label{subsec:fi}
The final step applies fault injections on functional chip $C_O$ using generated test patterns from Section \ref{subsec:tpg}. Faults are injected at the key registers with any appropriate fault injection techniques described in Section~\ref{subsec:fault-techniques}. No fault injection is necessary at the key bits whose values are already determined as their values are no different from those already programmed in the chip $C_O$. If we receive a faulty response by applying the test pattern developed in \textit{Step-2}, the value of the secret key will be 1 as we have sensitized a \textit{sa1} fault during the ATPG; otherwise, the secret key is 0. If we generate a test pattern considering a \textit{sa0} fault, the faulty response results in the secret key of 0, and vice versa. \textit{Step-2} in Section \ref{subsec:tpg} and \textit{Step-3} in Section \ref{subsec:fi} are repeated until the entire secret key is found. Consequently, fewer faults are injected compared with the DFA since injections happen only at key locations (of the same logic cone) that block the propagation of the to-be-determined key bits.

\subsection{Proposed Algorithm for AFIA} \label{subsec:alg-afia}


\begin{algorithm}[t]
\setlength{\intextsep}{1\baselineskip}
\SetAlgoLined
\SetKwInOut{Input}{Input}\SetKwInOut{Output}{Output}\SetKw{Continue}{continue}
\Input{~Locked gate-level netlist ($C_L$)}
\Output{~Secret key (\textit{KEY})}

\vspace{-5px}
\nonl \rule{0.45\textwidth}{0.4pt}

// ---------------------- Cone Analysis -------------------------------- \\
$[K, Y, G]$ $\gets$ \texttt{netlist2Graph(}$C_L$\texttt{)}\; 

\textit{Gflip} $\gets$ \texttt{flipEdges(}\textit{G}\texttt{)}\;
$A \gets [\ ]$ \;
\For {\bf{each} $y_j$ \textbf{in} \textmd{Y} }{
$[LK_j$, $LC_j]\gets$ 
\texttt{extractCone(}\textit{Gflip}, $K$, $y_j$\texttt{)}\;
$A \gets$ append vector $LK_j$ as the last column \;
}
\nonl // ------------- ATPG test pattern generation ------------------------\\
Recovered key bits from \textit{Step-3} of AFIA, $K^R \gets \varnothing$ \;


\While{($A\ !\hspace{-2pt}=\textbf{false}$)}{
$[K^U_{LC}] \gets$  \texttt{fConeWMinKeys(}$A,K$\texttt{)} \;
\If{$K^U_{LC} != \varnothing$}{ 
\For{$l\gets0$ \KwTo ($|K^U_{LC}|-1$) }{
Add a \textit{sa1} fault at key line $K^U_{LC}[l]$ \;
Add constraints at recovered key bits to\hspace{-1pt} \textit{K}$^R$\;
Test pattern $P_l\gets$ run ATPG (\textit{set\_atpg -fill X})\;
Remove all faults \;
Remove all constraints \;
\nonl // ---------------- Fault Injection ------------------------\\
Invoke \textit{Step-3} of AFIA with $P_l$\;
Add recovered key $K^U_{LC}[l]$ to $K^R$ \; 
Assign \textbf{false} to all entries in key $K^U_{LC}[l]$'s row in \textit{A} \;
}
}
}
Report the secret key, \textit{KEY} $\gets \{K,K^R\}$ \;
\caption{AFIA: ATPG-guided Fault Injection Attack.} \label{alg:afia}
\end{algorithm}

Algorithm~\ref{alg:afia} describes the implementation details of AFIA. The adversary first constructs a directed graph $G$ from the locked netlist $C_L$ (Line 1), as elaborated in Section \ref{subsec:coneA}. Aside from converting netlist to graph, \texttt{netlist2Graph(.)} returns the key list $K$ and output list $Y$. By exploiting directed graph structure, logic cone $LC_j$ can be easily extracted by flipping all edges in graph $G$ (Line 2) and run breadth-first-search (BFS) or depth-first-search, (DFS)~\cite{cormen2009introduction}, on output nodes $y_j$. The key-cone association matrix \textit{A} is declared as an empty array, where the cone and key information will be added (Line 3). 
Function \texttt{extractCone(.)} is implemented with BFS. It returns the directed subgraph of logic cone $LC_j$ and a logical (\textbf{true}/\textbf{false}) vector $LK_j$ of dimension $\mathcal{K} \times 1$. If key bit $k_q$ is inside cone $LC_j$, $LK_j[q]=$ \textbf{true}; else, $LK_j[q]=$ \textbf{false}. Matrix $A$ is updated by concatenating all vectors $LK_j$'s together (Line 6) so that the complete $A$ has $\mathcal{K}$ rows and N columns, as explained in Section \ref{subsec:coneA}. 

AFIA invokes \texttt{fConeWMinKeys(.)} (Line10) and obtains a vector $K^U_{LC}$ of all unknown keys in the logic cone with the fewest (positive) unknown keys. For simplicity, $K^U_{LC}$ records the row indices of the unknown keys, as in matrix $A$. For every key bit in $K^U_{LC}$, the \textit{sa1} is set on the to-be-determined key (Line 13). The recovered key values in $K^R$ are appended as constraints (Line 14). Test pattern $P_l$ (Line 15) is generated after invoking ATPG. All the stuck-at faults (Line 16) and constraints (Line 17) are removed. When $P_l$ and fault injections (Line 18) are applied on the working chip $C_O$, $K^U_{LC}[l]$ bit is recovered by referencing the ATPG's predicted output of the corresponding $P_l$. Afterward, the correct bit value is added to the recovered key list $K^R$ (Line 19). Since this bit is recovered, it is no longer an unknown key, and AFIA updates the association matrix \textit{A} to assign logical zero to all entries on key $K^U_{LC}[l]$’s row (Line 20). This is conceptually equivalent to deleting $K^U_{LC}[l]$ from the unknown key list as \texttt{fConeWMinKeys(.)} will only count the number of non-zero entries per column. When all key bits in $K^U_{LC}$ are determined, the adversary moves on to the subsequent logic cone (Line 10). Finally, when all cones are covered, the secret key \textit{KEY} is returned (Line 24).

\subsection{Example}
Here, we use the same circuit as in Figure \ref{fig:fault-propagation} as an example to illustrate how AFIA works. The circuit has six inputs, two outputs, and three key bits. With two outputs, this circuit has two logic cones, as in Figure \ref{fig:Key_Sen_example}. 
The same D-Algorithm~\cite{bushnell2004essentials} is applied to show the propagation of stuck-at-faults. Based on cone analysis in Section \ref{subsec:coneA}, logic cone $LC_0$ contains two key bits, $k_0$, $k_1$, cone $LC_1$ has only one key $k_2$. Thus, the association matrix \textit{A} can be represented as:
\[
  A=\kbordermatrix{
    & LC_0 & LC_1 \\ 
    k_0 & 1 & 0 \\
    k_1 & 1 & 0 \\
    k_2 & 0 & 1 \\
  }.
\]

\begin{figure*}[t!]
    \centering
    \includegraphics[width=\linewidth]{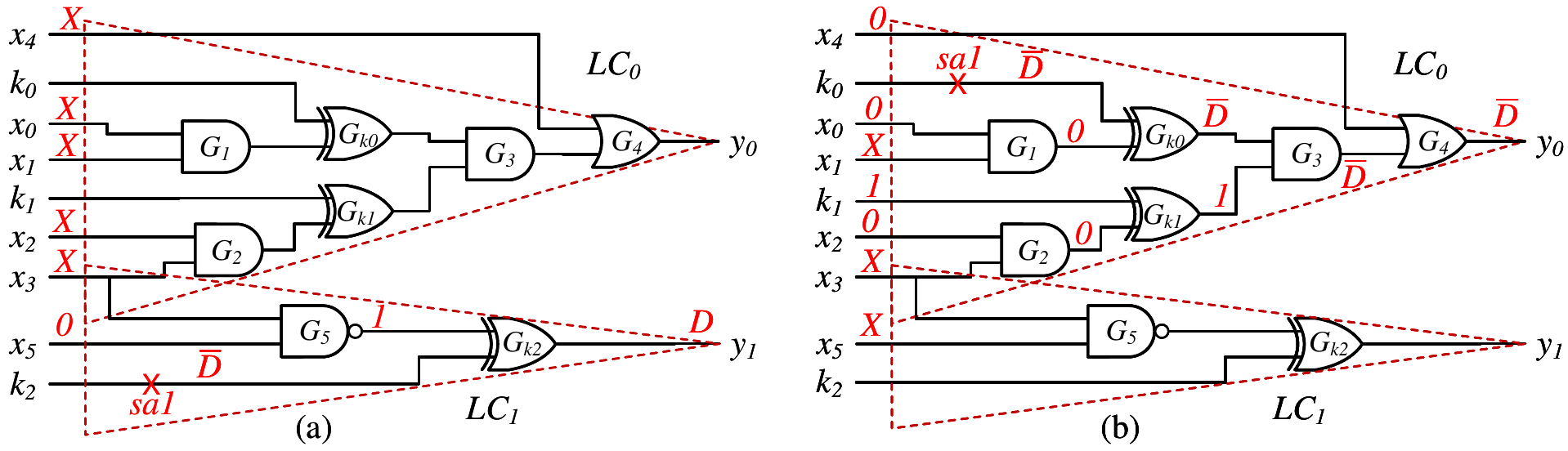}
    \caption{Test Pattern Generations for AFIA. (a) Test Pattern $P_0=[XXXXX0]$ for \textit{sa1} at $k_2$. (b) Test Pattern $P_1=[0X0X0X]$ for \textit{sa1} at $k_0$ with injected fault $k_1=1$.}
    \label{fig:Key_Sen_example}
\end{figure*}

AFIA picks a logic cone with the fewest number of unknown keys to solve (Line 10, Algorithm \ref{alg:afia}). Since all keys are unknown at this time, \texttt{fConeWMinKeys(.)} function selects logic cone $LC_1$ and returns $K^U_{LC}=[2]$. This cone has one key bit $k_2$, to which we assign \textit{sa1}. Using D-Algorithm, fault value $\overline{D}$ is marked on this key line. Here, the output $y_1$ is directly connected to XOR key gate $G_{k_2}$, and we can propagate this fault $\overline{D}$ to output $y_1=D$ with logic 1 for the other input of this XOR gate, as in Figure \ref{fig:Key_Sen_example}(a). Test pattern $P_0=[x_0x_1\ldots x_5]=[$\textit{XXXXX0}$]$ can detect \textit{sa1} for key $k_2$. Here, the value of the recovered key is 1 when the output is faulty. Otherwise, the recovered key is 0 as we have sensitized a \textit{sa1} fault during the ATPG. Note that no fault injection is necessary to determine this key. Matrix \textit{A} is updated with all zeros on the $k_2$'s row,
\begin{align*}
  A=\kbordermatrix{
    & LC_0 & LC_1 \\ 
    k_0 & 1 & 0 \\
    k_1 & 1 & 0 \\
    k_2 & 0 & 0 \\
  }.
\end{align*}
In the next iteration (Line 10), there is only one logic cone (also the cone with the least unknown keys), $LC_0$, left in matrix $A$ that has unknown keys. Function \texttt{fConeWMinKeys(.)} identifies $LC_0$ and yields $K^U_{LC}=[0\ 1]^T$, which captured the indices of unknown keys $k_0$, $k_1$. 
With two keys $k_0$ and $k_1$, AFIA chooses $k_0$ first randomly (Algorithm \ref{alg:afia} Line 13). By adding \textit{sa1} at $k_0$, test pattern $P_1=[x_0x_1\ldots x_5]=[$\textit{0X0X0X}$]$ with logic \textit{1} fault on $k_1$ can propagate the faulty response $\overline D$ in $k_0$ to $y_0$, as shown in Figure \ref{fig:Key_Sen_example}(b).  
Fault injection is performed at $k_1$ by setting its value to 1, and apply $P_1$ to determine $k_0$. AFIA, then, flushes out all the entries on row $k_0$ of matrix \textit{A},
\[
  A=\kbordermatrix{
    & LC_0 & LC_1 \\ 
    k_0 & 0 & 0 \\
    k_1 & 1 & 0 \\
    k_2 & 0 & 0 \\
  }.
\]

After $k_0$ is recovered, AFIA moves on to determining the other key in $LC_0$, $k_1$, (Line 12). We add a \textit{sa1} at $k_1$ (Line 13), along with constraining on $k_0$, $k_2$ to their determined values (Line 14). If the correct logical value for $k_0$ is \textit{0} (i.e., the stored key), test pattern $P_2=[x_0x_1\ldots x_5]=[$\textit{110X0X}$]$ can sensitize the \textit{sa1} of $k_1$ to the output $y_0$. If the stored secret key bit is $k_0=1$, the test pattern $P_2$ will be different, and its value will be $[$\textit{0X0X0X}$]$, which one can verify using the same D-Algorithm. Note that no fault injection is necessary to determine $k_1$. 


Finally, the matrix \textit{A} will be updated to all zeros 
and the AFIA recovers the entire key.

\subsection{AFIA Complexity Analysis} \label{subsec:attack-analysis}

The average complexity of the AFIA attack is linear with the key size ($K$). In this section, we show that AFIA is very effective at breaking any logic locking technique. However, the fault injection time may vary depending on the effectiveness of the equipment. It is practically instantaneous to obtain the secret key once the responses are collected from $C_O$.

\begin{lemma} \label{lemma:key-recovery}
    One input pattern is sufficient to recover one key bit. 
\end{lemma}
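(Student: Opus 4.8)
The plan is to exhibit, for any single key bit $k_i$, a test pattern that deterministically reveals $k_i$'s stored value, thereby proving the claim constructively rather than by a counting argument. First I would fix the target key bit $k_i$ and recall the setup of AFIA's test-pattern generation (Section~\ref{subsec:tpg}): we place a \textit{sa1} (stuck-at-1) fault on the key line $k_i$, we treat the still-unknown key bits that share a logic cone with $k_i$ as free ATPG inputs (so the tool may leave $X$'s on non-blocking ones), and we constrain any already-recovered key bits to their known logical values. The core fact I would invoke is the standard ATPG completeness statement for irredundant faults: for a key line viewed as an internal circuit node, there exists a test pattern detecting \textit{sa1} at that line unless the fault is redundant, and in logic locking the key line is genuinely controllable/observable through some cone (otherwise that key bit would have no effect on functionality and could be dropped). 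Granting this, ATPG returns a pattern $P_i$ together with the fault-free output the tool computes; by the definition of fault detection, the faulty circuit (with $k_i$ forced to $1$) and the fault-free circuit (with $k_i$ at its true value) produce complementary values on at least one primary output under $P_i$ precisely when the true value of $k_i$ is $0$, and identical values when the true value is $1$.

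Next I would connect this to what the adversary actually observes. Applying $P_i$ to the unlocked chip $C_O$ (after injecting the logic-$1$ faults only on the blocking key lines identified by $P_i$, so that those bits match the \textit{sa1} assumption used during ATPG, and leaving recovered/irrelevant bits alone since their stored values already agree with the ATPG constraints) yields a response that either agrees with the ATPG-predicted fault-free output or differs from it on the sensitized output. Agreement means the stored $k_i$ equals the injected/sensitized value $1$; disagreement means the stored $k_i$ is its complement, $0$. Either way, one pattern $P_i$ and its single application (with the accompanying fault injections) pin down $k_i$ uniquely. I would then note the symmetric variant with \textit{sa0} works identically, so the lemma does not depend on the polarity chosen.

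The step I expect to be the main obstacle — and the one I would state carefully rather than wave at — is justifying that the ATPG instance is always solvable, i.e., that the \textit{sa1} (or \textit{sa0}) fault on $k_i$ is detectable under the imposed constraints. This is where one must argue that (a) a key line that influences the circuit's functionality is necessarily sensitizable to some output once the other key bits are fixed to appropriate values, and (b) constraining the already-recovered key bits to their true values and the blocking key bits to the faulty value does not accidentally make the fault redundant — which holds because those are exactly the assignments the ATPG tool itself chose as a satisfying side-condition for detection. A clean way to present this is: the claim is equivalent to the existence of a primary-input assignment under which the Boolean function differs between $k_i=0$ and $k_i=1$ with all constrained key bits held fixed; such an assignment exists iff $k_i$ is not functionally dominated given those constraints, and AFIA's cone-ordering (Section~\ref{subsec:coneA}) together with the free-input treatment of unknown same-cone keys guarantees a cone through which $k_i$ is observable. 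I would keep the remaining bookkeeping (complementary-output reasoning, polarity symmetry) brief, since it is the routine part.
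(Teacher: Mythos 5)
Your proposal is correct and follows essentially the same route as the paper: both rest on the standard ATPG fact that a single test pattern detects any non-redundant stuck-at fault, combined with the observation that a key gate inserted to alter functionality cannot constitute redundant logic, so the \textit{sa1} (or \textit{sa0}) on the key line is detectable and the resulting pattern reveals the bit. Your write-up is simply a more detailed, constructive expansion (handling the constraints on recovered keys and the oracle-comparison step explicitly) of the paper's terser argument.
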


\begin{proof}
    A single test pattern is sufficient to detect a \textit{saf} if such a fault is not redundant~\cite{bushnell2004essentials}. A redundant fault results from a redundant logic that cannot be exercised from the inputs. As the key gates are placed to modify the functionality, it cannot be a redundant logic. As there exists one test pattern to detect a \textit{saf} at the key line, it can be used to recover one key bit. 
\end{proof}

\begin{theorem} \label{th:pattern-count}
AFIA recovers the entire secret key, $K$ using at most $\mathcal{K}$ number of test patterns, \ie,

\begin{equation}
TP_{AFIA}[f_K(C_{L}) = f(C_{O})] \leq \mathcal{K}.
\end{equation}
where $f_K()$ represents the functionality with $K$ as the key.
\end{theorem}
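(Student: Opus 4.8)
The plan is to combine Lemma~\ref{lemma:key-recovery} with the cone-analysis bookkeeping of Algorithm~\ref{alg:afia} to bound the total number of distinct test patterns by $\mathcal{K}$. First I would observe that each execution of the inner loop body (Lines 12--20) targets exactly one previously-unknown key bit $K^U_{LC}[l]$: it places a single \textit{sa1} on that key line, runs ATPG once to obtain one test pattern $P_l$, uses that pattern (with the appropriate fault injections) to recover the bit via the faulty/fault-free output comparison, and then marks that key's row in $A$ as resolved (Line~20), so the bit is never revisited. By Lemma~\ref{lemma:key-recovery}, one pattern suffices per key bit, and by Lemma~\ref{lemma:key-recovery}'s redundancy argument such a pattern always exists (the \textit{sa1} on a key line is testable because key gates are functional, not redundant, logic), so the loop never stalls. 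Hence each of the $\mathcal{K}$ key bits contributes at most one test pattern, giving $TP_{AFIA} \le \mathcal{K}$ immediately.

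The only subtlety — and the step I expect to require the most care — is arguing that the constraints accumulated in $K^R$ never render a subsequent \textit{sa1} fault redundant, i.e. that constraining already-recovered key lines to their \emph{correct} stored values cannot kill testability of a not-yet-recovered key's stuck-at fault. This is where the structural setup pays off: because the recovered bits are fixed to the values actually programmed in $C_O$, the constrained netlist is functionally a genuine partial specialization of $C_L$ toward $C_O$, not an arbitrary restriction, so any key gate that modifies functionality remains exercisable — the fault stays irredundant and ATPG returns a pattern. I would spell this out by noting that a redundant fault under these constraints would mean the corresponding key line has no controllability/observability in the chip $C_O$ itself, contradicting that the key bit genuinely affects $f(C_O)$ (which every inserted key bit does, by construction of logic locking).

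Finally I would note the bound is tight only in the worst case: the "at most" is strict whenever two key bits lie in disjoint logic cones (no mutual blocking), since then a single ATPG pattern with $X$-fill on the irrelevant key lines can sensitize both simultaneously, so fewer than $\mathcal{K}$ patterns may suffice; this is consistent with the claimed average-case linearity and with the example in Figure~\ref{fig:Key_Sen_example}, where one pattern ($P_0$) handles $k_2$ with no fault injection at all. I would close by remarking that the per-cone ordering chosen by \texttt{fConeWMinKeys(.)} affects only the number of injected faults (Theorem~\ref{th:worst-case}), not the pattern count, so the $\mathcal{K}$ bound holds regardless of cone traversal order.
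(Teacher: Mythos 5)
Your proposal is correct and follows essentially the same route as the paper: invoke Lemma~\ref{lemma:key-recovery} to charge at most one test pattern per key bit, conclude $TP_{AFIA}\le\mathcal{K}$, and note the bound can be strict when keys in disjoint cones are detected by a shared pattern. Your added argument that constraining already-recovered keys to their correct values cannot make a remaining key's stuck-at fault redundant is a worthwhile refinement the paper's proof leaves implicit (it appears only informally in its discussion of fault-tolerant circuits), but it does not change the underlying approach.
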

 
\begin{proof}
A $C_L$ with a $\mathcal{K}$-bit key is injected with a \textit{saf} fault on every key line. As AFIA requires one test pattern to obtain one key bit (see Lemma~\ref{lemma:key-recovery}), the upper bound of the number of test patterns is $\mathcal{K}$. However, a single pattern can detect two or more stuck-at faults on the key lines if their effect is visible in different logic cones (\eg, different outputs). As a result, the required number of test patterns to recover the entire key ($K$) can be less than $\mathcal{K}$.    
\end{proof}

\begin{theorem}
AFIA is applicable to strong logic locking~\cite{rajendran2012security}, where pairwise key gates are inserted to block the propagation of one key by the other.
\end{theorem}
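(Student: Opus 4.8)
The plan is to show that the mutual blocking on which strong logic locking (SLL) relies is precisely what AFIA's fault injections dissolve, so that, constraint by constraint, every key bit becomes sensitizable to a primary (or pseudo-primary) output. First I would fix notation for the SLL construction of~\cite{rajendran2012security}: each key bit $k_i$ drives an XOR/XNOR key gate, and ``non-mutable convergent'' placement means that every sensitization path from $k_i$ to an output passes through other key gates whose key lines form a blocking set $B_i\subseteq K\setminus\{k_i\}$. The statement to establish is then: with every line of $B_i$ held at a fixed polarity (logic $1$ for the \textit{sa1} variant) and every already-recovered key pinned to its true value, the stuck-at fault at $k_i$ is non-redundant; consequently constrained ATPG returns a pattern $P_i$, and \textit{Step-3} of AFIA (Section~\ref{subsec:fi}) recovers $k_i$ exactly by comparing the oracle response to $P_i$ against the ATPG-predicted fault-free value.

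The technical heart is this non-redundancy claim, and I would argue it in two moves. First, \emph{transparency of the key gates}: on any gate $g=a\oplus b$ on a path from $k_i$ to an output, the blocking key enters as the \emph{side} input, and a fault effect $D$ (or $\overline D$) arriving on $a$ propagates to $g$ for every logic value of $b$, so fixing $B_i$ to a common polarity can never erase the fault effect at a convergent key gate. Second, \emph{non-degeneracy of the base logic}: what is left is propagation through the ordinary gates of the circuit, which is the detectability of the same stuck-at fault in the unlocked netlist; that fault is non-redundant because the key gate was inserted to alter functionality, which is the very argument of Lemma~\ref{lemma:key-recovery}. Hence the constrained ATPG instance is satisfiable. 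I would also note that the polarity injected on $B_i$ is \emph{known} (it is the chosen fault polarity), so the ATPG-predicted output is well defined and the recovered bit is unambiguous, even though the injected values need not equal the true key.

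Finally I would assemble the iteration exactly as in Algorithm~\ref{alg:afia}: the $\mathcal{K}$ keys are processed cone by cone (Section~\ref{subsec:coneA}), and once a bit is recovered it is appended as an ATPG \emph{constraint} rather than an injected fault; this monotonically shrinks every remaining blocking set $B_j$, so no fault is ever injected on an already-known key, and in the worst case of a single cone with $n$ pairwise-blocking keys the fault budget is the $\tfrac{n(n-1)}{2}$ of Theorem~\ref{th:worst-case}. By Lemma~\ref{lemma:key-recovery} each bit costs one pattern, and by Theorem~\ref{th:pattern-count} at most $\mathcal{K}$ patterns recover $K$ (fewer when blocked bits sit in disjoint cones). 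I expect the main obstacle to lie in the non-redundancy claim of the previous paragraph: one must be sure that the \emph{simultaneous} fixing of all pairwise-blocking keys to one polarity never conspires to make the target fault redundant downstream of the convergent node. This is where XOR/XNOR transparency does the real work, and it is also the only place the argument needs a mild structural hypothesis on the key gates (XOR/XNOR, or, post-synthesis, any key gate whose side input can be legalized from the fixed blocking keys, cf.\ Section~\ref{subsec:attack-analysis}).
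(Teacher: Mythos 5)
Your proof is correct and takes essentially the same route as the paper's: the paper's entire argument is that injecting faults at the (up to $\mathcal{K}-1$) remaining key lines fixes their values so they ``no longer remain unknown'' and hence no longer block the target key's propagation. You simply make explicit what the paper leaves implicit --- the non-redundancy of the constrained stuck-at fault (which the paper delegates to Lemma~\ref{lemma:key-recovery}) and the pattern/fault budgets (which it defers to Theorems~\ref{th:pattern-count} and~\ref{th:worst-case}).
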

 
\begin{proof}
In strong logic locking, the propagation of one key is blocked due to the other key. However, $(\mathcal{K}-1)$ faults are injected at $(\mathcal{K}-1)$ key lines, worst-case scenario, except for the one whose value needs to be determined. Once an external fault is injected into the functional chip, the key value is fixed and no longer remains unknown. Hence, AFIA is applicable to strong logic locking.     
\end{proof}

\begin{theorem} \label{th:worst-case}
The worst-case complexity for the total number of faults injected in AFIA is $\bm{O}(\mathcal{K}^2)$.
\end{theorem}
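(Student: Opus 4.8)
The plan is to bound the total number of injected faults by summing, over all key bits, the worst-case number of blocking keys that must be fixed when that bit is recovered. From Theorem~\ref{th:pattern-count} and Lemma~\ref{lemma:key-recovery}, AFIA recovers each of the $\mathcal{K}$ key bits with a single test pattern, and fault injection is required only for the currently-unknown key bits residing in the same logic cone that block the propagation of the targeted bit. So the first step is to observe that for any single targeted key bit, the number of injected faults is at most $(\mathcal{K}-1)$ — the extreme case being all remaining key bits living in one cone and all of them blocking (this is exactly the strong logic locking scenario handled by the previous theorem). This immediately gives a crude bound of $\mathcal{K}(\mathcal{K}-1) = \bm{O}(\mathcal{K}^2)$ for the total.

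Second, I would tighten this to match the $\frac{\mathcal{K}(\mathcal{K}-1)}{2}$ figure quoted in the CLIC-A comparison subsection. The key observation is the order in which AFIA processes bits: once a key bit is recovered it becomes a constraint, not an injected fault, for all subsequent patterns (Algorithm~\ref{alg:afia}, lines 14 and 19--20). Hence when AFIA targets the $i$-th bit (in recovery order) within a cone of $n$ pairwise-blocking keys, only the $n-i$ still-unknown blocking bits need faults injected, not all $n-1$. Summing $\sum_{i=1}^{n}(n-i) = \frac{n(n-1)}{2}$ over a single worst-case cone, and noting that a partition of the $\mathcal{K}$ keys into several cones only decreases this sum (since $\sum_j \frac{n_j(n_j-1)}{2} \le \frac{(\sum_j n_j)(\sum_j n_j - 1)}{2}$ when $\sum_j n_j = \mathcal{K}$), the total number of injected faults is at most $\frac{\mathcal{K}(\mathcal{K}-1)}{2}$, which is $\bm{O}(\mathcal{K}^2)$.

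Third, I would confirm the bound is tight by exhibiting the witness: a single logic cone locked with $\mathcal{K}$ pairwise non-mutable convergent key gates (the strong logic locking construction of~\cite{rajendran2012security}), where no key can be sensitized without fixing every other key in the cone, so the first targeted bit forces $\mathcal{K}-1$ injections, the second forces $\mathcal{K}-2$, and so on — achieving exactly $\frac{\mathcal{K}(\mathcal{K}-1)}{2}$ and showing the quadratic growth is unavoidable in the worst case.

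The main obstacle I anticipate is the convexity/superadditivity argument in the second step — namely justifying that splitting keys across multiple cones never increases the fault count, and handling the subtlety that a key may sit in the intersection of several cones (so the "cone with fewest unknown keys" heuristic of \texttt{fConeWMinKeys} could, in principle, route a shared key through different cones on different iterations). I would argue that whichever cone is chosen, the number of blocking keys is still bounded by the number of unknown keys in that cone minus one, and that the running total telescopes correctly because each key transitions from "unknown" to "recovered" exactly once across the whole run; the precise accounting of shared keys is where the write-up must be careful, but it does not change the $\bm{O}(\mathcal{K}^2)$ asymptotics.
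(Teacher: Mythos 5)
Your proposal is correct, and its core computation is exactly the paper's: the paper proves the theorem by taking a single logic cone with $\mathcal{K}$ pairwise non-mutable convergent keys and summing $(\mathcal{K}-1)+(\mathcal{K}-2)+\cdots+0=\frac{\mathcal{K}(\mathcal{K}-1)}{2}$, using the same observation you make that each recovered bit becomes an ATPG constraint rather than an injected fault for subsequent patterns. The difference is one of completeness: the paper's proof is really only your third step --- it computes the fault count of the presumed worst-case instance and concludes $\bm{O}(\mathcal{K}^2)$, implicitly taking for granted that no other configuration is worse. You additionally supply the genuine upper-bound half: the crude $\mathcal{K}(\mathcal{K}-1)$ bound, the superadditivity argument that distributing $\mathcal{K}$ keys over several cones (with $\sum_j \frac{n_j(n_j-1)}{2}\le\frac{\mathcal{K}(\mathcal{K}-1)}{2}$) never increases the total, and the caveat about keys shared among cones under the \texttt{fConeWMinKeys} heuristic. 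That extra accounting is what actually turns the witness calculation into a proof that the worst case is $\bm{O}(\mathcal{K}^2)$ (and that $\frac{\mathcal{K}(\mathcal{K}-1)}{2}$ is tight), so your version is the more rigorous rendering of the same idea; the paper's buys brevity at the cost of leaving the maximality of the single-cone scenario implicit.
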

\begin{proof}
Let us consider a circuit with a single logic cone locked with a secret key vector $\{k_0, \ldots, k_{\mathcal{K}-1}\}$. Suppose all key bits are pairwise non-mutable convergent, i.e., the propagation of one key bit depends on all the other keys. To sensitize the $1^{st}$ key bit, we need to add $\mathcal{K}-1$ faults during the fault injection process. The $2^{nd}$ key bit requires $\mathcal{K}-2$ faults as the value of the $1^{st}$ key bit is known. Similarly, the $3^{rd}$ key bit requires $\mathcal{K}-3$ faults, and so on. Thus, the total number of faults is: \vspace{5px}

~~~~~~~~~~~~~~~~~~~~~~$\sum_{i=1}^{\mathcal{K}} (\mathcal{K}-i)=\frac{\mathcal{K}\cdot(\mathcal{K}-1)}{2}$.

Thus, the worst-case complexity for the total number of faults injected is $\bm{O}(\mathcal{K}^2)$.
\end{proof}

\begin{theorem}\label{th:average-worst}
The average-case complexity for the total number of faults injected in AFIA is $\bm{O}(\mathcal{K})$. 
\end{theorem}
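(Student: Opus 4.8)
The plan is to confine the quadratic growth isolated in Theorem~\ref{th:worst-case} to the interior of individual logic cones, and then argue that under a realistic placement model the cones are not heavily loaded on average, so that the per-cone costs sum to only $\bm{O}(\mathcal{K})$.

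First I would break the injected-fault count down by cone. By the cone analysis of Section~\ref{subsec:coneA}, a key bit sitting in cone $LC_j$ can be blocked only by the other key bits of $LC_j$, and AFIA sensitizes each key through the cone in which it has the fewest still-unknown companions. Using the same ``$(\mathcal{K}-1)+(\mathcal{K}-2)+\cdots$'' accounting as in Theorem~\ref{th:worst-case}, but applied separately inside each cone, the total number of injected faults is at most $\sum_{j=0}^{N-1}\binom{t_j}{2}$, where $t_j$ is the number of key bits charged to $LC_j$ and $\sum_{j}t_j=\mathcal{K}$.

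Next I would introduce a placement model matching the empirical behaviour seen in the c432-RN320 example of Figure~\ref{fig:c432-RN320} and in the benchmarks of Section~\ref{sec:experimental-results}: a locking tool disperses the $\mathcal{K}$ key gates over the design so that each key bit lands in a cone essentially independently of the others, and the number of cones $N$ (primary plus pseudo-primary outputs) grows at least linearly in $\mathcal{K}$, say $N\ge c\,\mathcal{K}$ for a constant $c>0$ --- i.e.\ no single cone absorbs a constant fraction of all keys. Treating the assignment as $\mathcal{K}$ balls dropped uniformly into $N$ bins gives $\mathbb{E}\!\left[\binom{t_j}{2}\right]=\binom{\mathcal{K}}{2}/N^{2}$, hence
\begin{equation}
\mathbb{E}\!\left[\sum_{j=0}^{N-1}\binom{t_j}{2}\right]=\frac{1}{N}\binom{\mathcal{K}}{2}=\frac{\mathcal{K}(\mathcal{K}-1)}{2N}\le\frac{\mathcal{K}-1}{2c}=\bm{O}(\mathcal{K}).
\end{equation}

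The arithmetic here is routine; the part I expect to be the real obstacle is justifying the modelling hypothesis $N=\Omega(\mathcal{K})$ (equivalently, that the average cone load is $\bm{O}(1)$). It cannot be deduced from the bare definition of logic locking --- indeed the quadratic worst case of Theorem~\ref{th:worst-case} is precisely the degenerate regime in which this hypothesis fails --- so the honest route is to state it as a property of practical locking flows and to corroborate it with the measured key/cone distributions reported in Section~\ref{sec:experimental-results}. A weaker sufficient condition that avoids balls-in-bins altogether is simply that the expected number of key bits sharing a cone with any fixed key bit is $\bm{O}(1)$; summing that over the $\mathcal{K}$ targets yields the same linear bound directly.
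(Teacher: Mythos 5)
Your proposal is correct and rests on the same core idea as the paper's proof: decompose the fault count cone by cone, apply the quadratic worst-case bound of Theorem~\ref{th:worst-case} inside each cone, and conclude linearity from the assumption that the average number of keys per cone is bounded. The difference is one of formalization. The paper argues deterministically: it assumes the $\mathcal{K}$ keys are evenly distributed over $N$ essentially non-overlapping cones, so each cone holds $a=\mathcal{K}/N$ keys and contributes $\frac{a(a-1)}{2}$ faults, giving $\frac{a-1}{2}\mathcal{K}=\bm{O}(a\mathcal{K})=\bm{O}(\mathcal{K})$ with $a$ implicitly treated as a constant. You instead randomize the placement (balls-in-bins), compute $\mathbb{E}\bigl[\sum_j\binom{t_j}{2}\bigr]=\binom{\mathcal{K}}{2}/N$, and invoke $N\ge c\,\mathcal{K}$; this buys a statement that tolerates uneven loads in expectation rather than requiring an exactly uniform split, and, importantly, it makes explicit the modelling hypothesis ($a=\bm{O}(1)$, equivalently $N=\Omega(\mathcal{K})$) that the paper leaves implicit in its final $\bm{O}(a\mathcal{K})=\bm{O}(\mathcal{K})$ step. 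Your caveat that this hypothesis cannot be derived from the definition of logic locking and is exactly what fails in the worst case of Theorem~\ref{th:worst-case} (e.g., SFLL, where all keys share one restoration cone) is accurate and consistent with the paper's own discussion in Section~\ref{subsec:fault-count-comp}; both proofs also share the simplification of charging each key to a single cone and neglecting cone overlap.
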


\begin{proof}
Consider a circuit with $N$ logic cones, each cone $LC_j$ has negligible or no overlap with its neighboring cones, $LC_{j-1}$ and $LC_{j+1}$, and $\mathcal{K}$ keys are evenly distributed (amortized) among the $N$ cones. For each cone, it has an average $a=\frac{\mathcal{K}}{N}$ keys). Since negligible overlap between cones, there is no preference between the order of execution on deciphering keys in logic cones, and each cone needs to inject $\frac{\mathcal{K}/N\cdot(\mathcal{K}/N-1)}{2}$ faults. Overall, by summing up all faults for every logic cone, the required number of fault injections is $N\cdot\frac{\mathcal{K}/N\cdot(\mathcal{K}/N-1)}{2}$. 

Thus, the average-case complexity is $N\cdot\frac{\mathcal{K}/N\cdot(\mathcal{K}/N-1)}{2}= \frac{a-1}{2}\cdot \mathcal{K}= \bm{O}(a\mathcal{K})=\bm{O}(\mathcal{K})$.
\end{proof}

\subsection{AFIA on Fault-Tolerant Circuit}
Fault-tolerant circuits and circuits with redundancy may prevent the injected faults from being revealed at the output. However, it does not affect our proposed AFIA. As the objective of logic locking is to produce incorrect output for wrong key combinations under certain input patterns, these input patterns ensure the differential output behavior for keys. Thus, the key cannot be inserted inside the region of redundancy, where no input pattern can ever produce differential output. Any key bit placed at these locations cannot corrupt the output so that either logic 0 or logic 1 is its correct value. The SoC designer would not place a key bit in such a way that both logic values gives the correct output since it contradicts the principle of logic locking. In summary, redundancies are not a countermeasure against AFIA attack for a well-designed locked circuit.

\subsection{AFIA on Non-Functional-Based Locking Techniques}
Our fault injection-based attack can also be extended to non-functional logic locking techniques~\cite{rahman2021security, kamali2019full}. The dynamically obfuscated scan-chain (DOSC) technique~\cite{rahman2021security} has three secrets stored in the tamper-proof memory, which are the functional obfuscation key, the LFSR seed, and the control vector. AFIA can break the functional obfuscation key if the obfuscated scan-chain becomes transparent to the attacker. To achieve that, the attacker needs to inject faults at all the \textit{Scan Obfuscation Key} registers directly to get a known shift out state from the functional IP. For the routing-based locking technique~\cite{kamali2019full}, our proposed attack is applicable to breaking the key-configurable logarithmic-based network (CLN) as the switch-boxes (SwB) consist of MUX-based key gates. Once a fault is injected into a key register, the selection path for the corresponding MUX is determined. We can target these keys one at a time with test patterns generated from the ATPG tool and inject faults on dependent key registers.

\vspace{10px}
\section{Experimental Results} \label{sec:experimental-results} 

This section provides the feasibility of fault injection to break secure logic locking. Extensive simulations are performed on different benchmarks with different locking techniques to demonstrate the effectiveness of the proposed fault injection attack for breaking a secure locking technique. We have shown a significant reduction of total fault count for AFIA compared to DFA, presented in our conference paper, in breaking the same locked benchmark.

\begin{figure}[t]
        \centering
        \includegraphics[width = \linewidth]{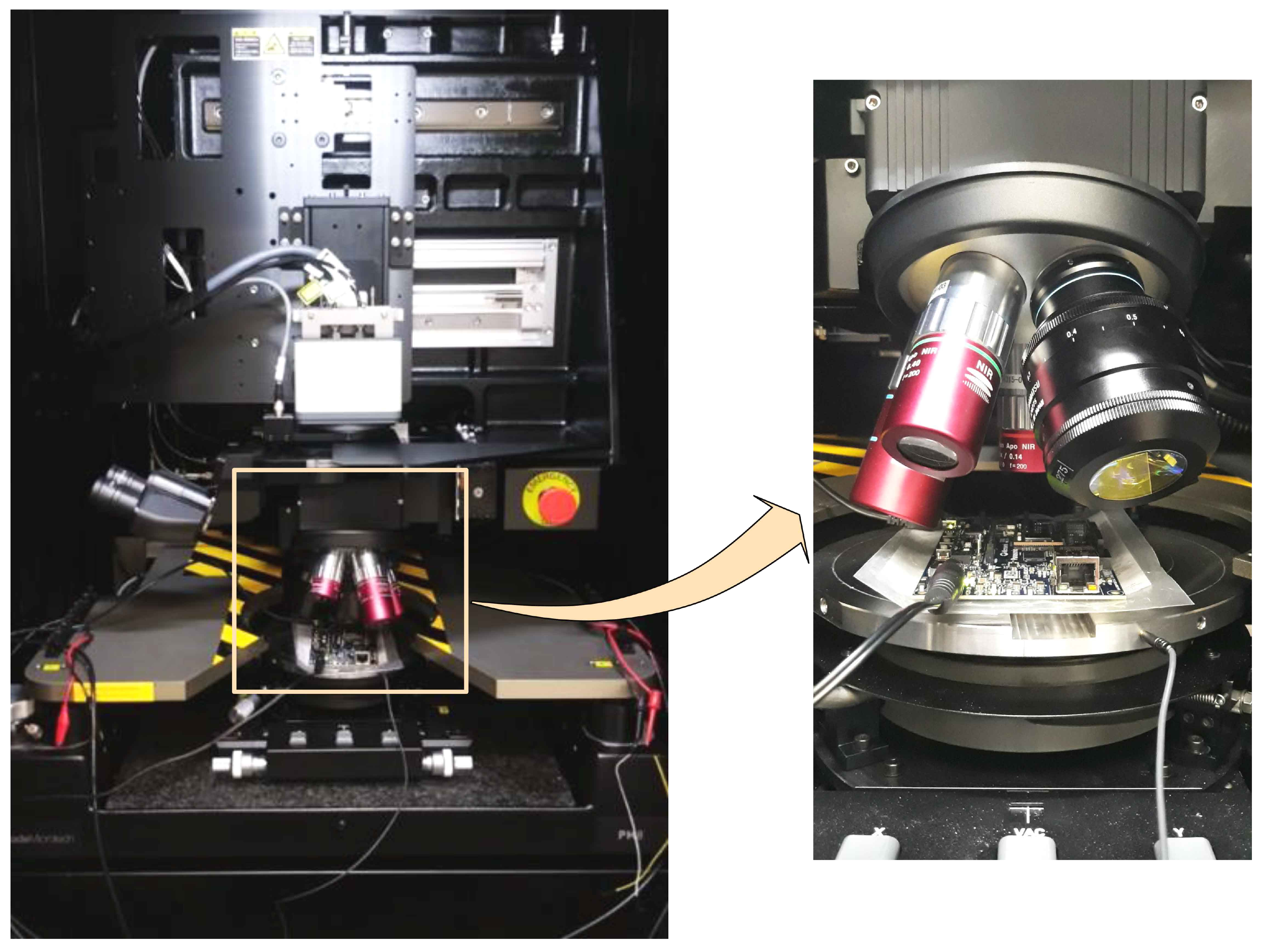}
        \caption{The FPGA board placed under the lens for laser-fault injection at the target registers.}
        \label{fig:setup}
\end{figure}

\vspace{-15px}
\subsection{Laser Fault Injection}\label{subsec:laser-fault-injection}\vspace{-5px}
To demonstrate the laser fault injection attack, we selected a Kintex-7 FPGA~\cite{xilinx}, which is used as the device-under-test (DUT). Locked benchmark circuits are implemented in the Kintex-7 FPGA, where faults are injected into key registers. Figure~\ref{fig:setup} shows the laser fault injection~(LFI) setup with a Hamamatsu PHEMOS-1000 FA microscope~\cite{PHEMOS-1000}. The equipment consists of a diode pulse laser source (Hamamatsu C9215-06) with a wavelength of 1064 nm. Three objective lenses were used during this work: 5x/0:14 NA, 20x/0:4 NA, 50x/0:76 NA. The 50x lens is equipped with a correction ring for silicon substrate thickness. The laser diode has two operation modes -- a) low power (200 mW) pulse mode, and b) high power (800 mW) impulse mode. The high power impulse mode can be used for laser fault injection. The laser power can be adjusted from 2$\%$ to 100$\%$ in 0.5$\%$ steps. 

Photon emission analysis~\cite{rahman2019backside} can be used to localize the implemented locked circuitry in the DUT. Thereafter, the DUT is placed under the laser source for LFI. A trigger signal is fed to the PHEMOS-1000 to synchronize the LFI with the DUT operation. Once the device reaches a stable state after power-on, the laser is triggered on the target key registers. After the fault injection, we need to guarantee that the device is still functioning as expected and has not entered into a completely dysfunctional state. The laser triggering timing can be checked by a digital oscilloscope for greater precision. 


\subsection{Fault Count Comparison} \label{subsec:fault-count-comp}

The differential attack methodology (DFA) introduced in Section \ref{sec:background} and in~\cite{jain2020atpg} requires $\mathcal{K}-1$ number of constraints per test pattern. The total number of faults that need to be injected to determine one key bit is $2\mathcal{K}-1$, as $C_A$ and $C_F$ require $\mathcal{K}-1$ and $\mathcal{K}$ faults, respectively. The total number of faults required to decipher $\mathcal{K}$ key bits is $(2\mathcal{K}-1)\cdot\mathcal{K} = 2\mathcal{K}^2-\mathcal{K}$. Compared to DFA, AFIA only requires injecting faults to key registers if these key bits are interdependent, where the propagation of one key is dependent on others. 

Table~\ref{tab:fault-compare} shows the number of faults to be injected for both the DFA (Algorithm~\ref{alg:TP-generation}) and AFIA (Algorithm~\ref{alg:afia}). To demonstrate the feasibility of the fault injection attack on logic locking, we computed the number of faults after generating test patterns using constrained ATPG using the Synopsys TetraMAX tool~\cite{SynopsysTetraMAX}. Note that the successful generation of test patterns using constrained ATPG guarantees the successful attack on locking. We choose benchmark circuits with  random logic locking (added `-RL' after the benchmark name) and strong logic locking (added `-SL') from TrustHub~\cite{salmani2018trust}, SFLL-hd (added `SFLL-hd'), SFLL-flex (added `SFLL-flex'), and SFLL-rem (added `SFLL\_rem') benchmarks from \cite{yasin2017provably}, and GitHub~\cite{github-sfllrem}. Column 2 represents the secret key size, whereas Columns 3 and 4 represent the number of faults to determine the entire key for DFA and AFIA, respectively. Data in Column 4 is collected under \textit{sa1} fault in test pattern generation (Algorithm \ref{alg:afia}). Finally, Column 5 shows the average number of faults to evaluate one key bit under AFIA. For example, with locked benchmark c432-RN320, the number of faults required for DFA is 2016, whereas AFIA requires only 48 faults to extract the 32 key bits, leading to 1.5 faults per key bit. For c1355-SL1280, the number of faults increased significantly to 32,640 for DFA. AFIA only requires 1,419 faults to determine the 128 key bits, or 11.09 faults per key bit.

\begin{table}[t]
\caption{Comparison of Number of Injected Faults} \vspace{-10px}
\label{tab:fault-compare} 
\begin{center}
\begin{tabular}{|M{1.2in}|M{0.3in}|M{0.4in}|M{0.3in}|M{0.3in}|}
\hline
{\multirow{2}{*}{\textbf{Locked Benchmark}}} & {\textbf{Key Size}} & \textbf{DFA} & \multicolumn{2}{c|}{\textbf{AFIA}} \\ \cline{3-5} 
 &   \textbf{$\bm{(\mathcal{K})}$} & \bm{$F_T$} & \bm{$F_{T}$} & \bm{${F_{T}}/{\mathcal{K}}$} \\ \hline
c432-RN320 & 32 & 2016 & 48 & 1.5 \\ \hline
c432-RN640 & 64 & 8128 & 165 & 2.58 \\ \hline
c432-RN1280 & 128 & 32640 & 1085 & 8.48 \\ \hline
c2670-RN1280  & 128 & 32640 & 520 & 4.06 \\ \hline
c3540-RN1280 & 128 & 32640 & 268 & 2.09 \\ \hline
c5315-RN1280 & 128 & 32640 & 282 & 2.20 \\ \hline
c6288-RN1280 & 128 & 32640 & 268 & 2.09 \\ \hline
c7552-RN1280 & 128 & 32640 & 334 & 2.61 \\ \hline
c1355-SL1280 & 128 & 32640 & 1419 & 11.09  \\ \hline
c1908-SL1280 & 128 & 32640 & 654 & 5.11  \\ \hline
c5315-SL1280 & 128 & 32640 & 3469 & 27.10 \\ \hline
c6288-SL1280 & 128 & 32640 & 368 & 2.88 \\ \hline
c7552-SL1280 & 128 & 32640 & 188 & 1.47 \\ \hline
b14\_C\_k8\_SFLL-hd  & 8 & 120 & 28 & 3.5 \\ \hline
b14\_C\_k16\_SFLL-flex  & 16 & 496 & 120 & 7.5 \\ \hline
b14\_C\_k32\_SFLL-flex  & 32 & 2016 & 496 & 15.5 \\ \hline
b14\_C\_k64\_SFLL-flex  & 64 & 8128 & 2016 & 31.5 \\ \hline
b14\_C\_k128\_SFLL-flex  & 128 & 32640 & 8128 & 63.5 \\ \hline
c432\_k8\_SFLL-hd  & 8 & 120 & 28 & 3.5 \\ \hline
c432\_k16\_SFLL-flex  & 16 & 496 & 120 & 7.5 \\ \hline
c432\_k32\_SFLL-flex  & 32 & 2016 & 496 & 15.5 \\ \hline
c880\_k8\_SFLL-hd  & 8 & 120 & 28 & 3.5 \\ \hline
c880\_k16\_SFLL-flex  & 16 & 496 & 120 & 7.5 \\ \hline
c880\_k32\_SFLL-flex  & 32 & 2016 & 496 & 15.5 \\ \hline
SFLL\_rem\_k128~\cite{github-sfllrem} & 128 & 32640 & 8128 & 63.5 \\\hline
\end{tabular}
\end{center} 
\end{table}

Based on Theorem \ref{th:average-worst}, if keys are uniformly distributed among logic cones, the number of fault injections for AFIA is linear with respect to key size, $\bm{O}(a\mathcal{K})=\bm{O}(\mathcal{K})$, with variable $a$ indicating the average key size per logic cone. If having the same key size, an RLL circuit with more logic cones, or a smaller $a$, (provided that the size of all logic cones are about the same), should, generally, has fewer fault injections than one with fewer logic cones. This is equivalent to having fewer injected faults in an RLL-based circuit that contains more output than the ones without (see definition of the number of logic cones in \ref{subsec:coneA}). Benchmark c432-RN1280 has a larger $a$ than other 128-bit RLL circuits, for c432 has only seven outputs, while c2670 has 140 outputs, c3540 has 22, c5315 has 123, c6288 has 32, c7552 has 108 outputs respectively. (Note, not all logic cones will have keys inside, but the circuit with more output usually has more key-embedded cones than those with fewer outputs.) This is the reason that c432-RN1280 requires considerably more fault injections in total, 1085, than other locked netlist with same key size, where c2670-RN1280 needs 520 faults, c3540-RN1280 has 268, c5315-RN1280 has 282, c6288-RN1280 has 268, c7552-RN1280 has 334, see Table \ref{tab:fault-compare}. 

RLL randomly picks a location in the original unlocked circuit for key gate insertion, while SLL produces more blocking keys. In terms of theoretical complexity analysis, as long as the key gates in RLL locked circuit are distributed uniformly, the number of fault injections for SLL should be larger than RLL, under the same original unlocked benchmark and the same key size, \eg, c5315-RN1280 and c5315-SL1280, c6288-RN1280 and c6288-SL1280. For SFLL-hd and SFLL-flex, each locked circuit has a perturbation unit and a restoration unit. All keys reside in the functionality restoration unit, where every key passes through the output of the restoration subcircuit to reach the primary output~\cite{sirone2020functional, alrahis2021gnnunlock}. Because of this restoration unit, all key bits are interdependent. Hence, all SFLL-flex and SFLL-hd circuits belong to the worst-case scenario as in Theorem \ref{th:worst-case}, in which the number of injected faults is $\frac{\mathcal{K}\cdot(\mathcal{K}-1)}{2}$.
We also evaluated our proposed attack on the latest SFLL variant, SFLL-rem~\cite{sengupta2018atpg,sengupta2020truly}. Although SFLL-rem does not have the added perturb unit, the keys are present in the restoration unit only, and our attack can still break it.

\section{Future Work} 
Although AFIA targets combinational logic circuits or sequential ones with scan-chain access, it can be extended to other clock-based and timing-based locking techniques that target output corruptibility in a different clock cycle~\cite{rahman2022oclock,sweeney2020latch, azar2021data}. These techniques require multiple clock cycles (typically two) to capture the key to a storage element and thus observe its effect on the circuit behavior (i.e., output corruptibility). Fortunately, the same fault injection-based attack proposed in this paper can be applied to these locking techniques as well. We, however, need to consider transition delay faults (\textit{TDFs}) or path delay faults (\textit{PDFs}) instead of stuck faults to propagate the effect of the targeted key on the output. The same Algorithm 2 can be applied to generate patterns to launch the attack. Note that the \textit{TDFs} and \textit{PDFs} require multiple captures (typically 2). By controlling the fault injection in a precise timing range, it is possible to observe the key through launch on shift (\textit{LOS}) and launch on capture (\textit{LOC}) schemes~\cite{bushnell2004essentials, savir1994broad}.

\vspace{-10px}
\section{Conclusion}\label{sec:conclusion}
This paper presents AFIA, a novel stuck-at fault-based fault injection attack that undermines the security of any logic locking technique. AFIA utilizes cone analysis to analyze the dependency of keys. Faults are injected only at the interdependent key bits, which is a significant improvement from the previously published attack DFA~\cite{jain2020atpg}, dropping the total number of faults to the linear multiple of key size. With the automatic test pattern generation (ATPG) tool, we constructed a pattern set, which is used to apply to an unlocked chip. Each pattern is sufficient to determine a one-bit key. All key bits are derived by comparing collected responses from fault injections and the predicted response from test pattern generation. We performed laser fault injections on Kintex-7 FPGA with various locked benchmark circuits and state-of-the-art locking techniques, and our results have demonstrated the effectiveness of the proposed AFIA scheme. Our future work will focus on developing a locking technique to prevent AFIA.

\vspace{5px}
\noindent \textbf{Funding:} This work was supported by the National Science Foundation under Grant Number CNS-1755733. 

\vspace{5px}
\noindent \textbf{Data Availability:} The authors declare that the data supporting the findings of this study are available within the article.

\vspace{-10px}
\section*{Declarations}
\vspace{-10px}
\noindent\textbf{Conflict of Interest/Competing Interest:} The authors have no conflicts of interest to declare that are relevant to the content of this article.
\bibliographystyle{spmpsci}


\vspace{10px}
\noindent \textbf{Yadi Zhong} is currently pursuing her Ph.D. in Computer Engineering from the Department of Electrical and Computer Engineering, Auburn University, AL, USA. She received her B.E. degree from the same university in 2020. Her research interest are logic locking, fault injection and hardware security, and post-quantum cryptography. She received Auburn University Presidential Graduate Research Fellowships in 2020.

\vspace{10px}
\noindent\textbf{Ayush Jain} received his M.S. Degree from the Department of Electrical and Computer Engineering, Auburn University, AL, USA in 2020. He is currently working as SoC Design Engineer at Intel Corporation. He received his B.Tech degree from the Electrical Engineering Department, Pandit Deendayal Petroleum University, Gujarat, India, in 2018. His current research interests include hardware security, VLSI design, and testing.

\vspace{10px}
\noindent\textbf{M Tanjidur Rahman} received his Ph.D. degree in electrical and computer engineering in 2021 from University of Florida. He obtained his BS (with honors) and MS in electrical and electronic engineering from Bangladesh University of Engineering and Technology (BUET) in 2012, and 2014, respectively. His research interests include hardware security and trust, physical assurance, configurable security architecture, and reliable VLSI design. Dr. Rahman has authored 7 technical journal and 8 conference papers. He has also published one book and one book chapter on physical assurance and chip backside security assessment. He has two patent applications under review. 

\vspace{10px}
\noindent\textbf{Navid Asadizanjani} received the Ph.D. degree in Mechanical Engineering from University of Connecticut, Storrs, CT, USA, in 2014. He is currently an Assistant Professor with the Electrical and Computer Engineering Department at University of Florida, Gainesville, FL, USA. His current research interest is primary on “Physical Attacks and Inspection of Electronics”. This includes wide range of products from electronic systems to devices. He is involved with counterfeit detection and prevention, system and chip level reverse engineering, Anti reverse engineering, etc. Dr. Asadizanjani has received and nominated for several best paper awards from International Symposium on Hardware Oriented Security and Trust (HOST) and International Symposium on Flexible Automation (ISFA). He was also winner of D.E. Crow Innovation award from University of Connecticut. He is currently the program chair of the PAINE conference and is serving on the technical program committee of several top conferences including International Symposium of Testing and Failure Analysis (ISTFA) and IEEE Computing and Communication Workshop and Conference (CCWC).

\vspace{10px}
\noindent\textbf{Jiafeng (Harvest) Xie} received the M.E. and Ph.D. from Central South University and University of Pittsburgh, in 2010 and 2014, respectively. He is currently an Assistant Professor in the Department of Electrical \& Computer Engineering, Villanova University, Villanova, PA. His research interests include cryptographic engineering, hardware security, post-quantum cryptography, and VLSI implementation of neural network systems. Dr. Xie has served as technical committee member for many reputed conferences such as HOST, ICCD, and ISVLSI. He is also currently serving as Associate Editor for Microelectronics Journal and IEEE Access. He was serving as Associate Editor for IEEE Transactions on Circuits and Systems-II: Express Briefs. He received the IEEE Access Outstanding Associate Editor for the year of 2019. He also received the Best Paper Award from HOST'19.

\vspace{10px}
\noindent\textbf{Ujjwal Guin} received his PhD degree from the Electrical and Computer Engineering Department, University of Connecticut, in 2016. He is currently an Assistant Professor in the Electrical and Computer Engineering Dept. of Auburn University, Auburn, AL, USA. He received his B.E. degree from the Dept. of Electronics and Telecommunication Engineering, Bengal Engineering and Science University, Howrah, India, in 2004 and his M.S. degree from the Dept. of Electrical and Computer Engineering, Temple University, Philadelphia, PA, USA, in 2010. He has developed several on-chip structures and techniques to improve the security, trustworthiness, and reliability of integrated circuits. His current research interests include Hardware Security \& Trust. He has authored several journal articles and refereed conference papers. He serves on the organizing committees of HOST, VTS, and PAINE, and  technical program committees of DAC, HOST, VTS, PAINE, VLSID, GLSVLSI, ISVLSI, and Blockchain. He is an active participant in the SAE International G-19A Test Laboratory Standards Development Committee and G-32 Cyber-Physical Systems Security Committee. He is a member of ACM and a senior member of IEEE.

\end{document}